\newtheorem{theorem}{Theorem}
\newtheorem{lemma}{Lemma}
\newtheorem{remark}{Remark}
\newcommand\norm[1]{\left\lVert#1\right\rVert}
\def\ps@pprintTitle{%
 \let\@oddhead\@empty
 \let\@evenhead\@empty
 \def\@oddfoot{\centerline{\thepage}}%
 \let\@evenfoot\@oddfoot}
\begin{document}

\begin{frontmatter}
\title{Attitude Control of a Novel Tailsitter: Swiveling Biplane-Quadrotor}

%\author[iitk]{Nidhish Raj\corref{cor1}\fnref{fn1}}
\author{Nidhish Raj\fnref{fn1}}
\ead{nraj@iitk.ac.in}
\author{Ravi N Banavar\fnref{fn2}}
\ead{banavar@iitb.ac.in}
\author{Abhishek\fnref{fn3}}
\ead{abhish@iitk.ac.in}
\author{Mangal Kothari\fnref{fn4}}
\ead{mangal@iitk.ac.in}

%\cortext[cor1]{Corresponding author}
\fntext[fn1]{Doctoral Student, IIT Kanpur}
\fntext[fn2]{Professor, IIT Bombay}
\fntext[fn3]{Associate Professor, IIT Kanpur}
\fntext[fn4]{Associate Professor, IIT Kanpur}

\address{Department of Aerospace Engineering,
IIT Kanpur, Kanpur, UP, India 208016}

\begin{abstract}
This paper proposes a solution to the attitude tracking problem for a novel quadrotor tailsitter unmanned aerial vehicle called swiveling biplane quadrotor. The proposed vehicle design addresses the lack of yaw control authority in conventional biplane quadrotor tailsitters by proposing a new design wherein two wings with two attached propellers are joined together with a rod through a swivel mechanism. The yaw torque is generated by relative rotation of the thrust vector of each wing. The unique design of this configuration having two rigid bodies interconnected through a rod with zero torsional rigidity makes the vehicle underactuated in the attitude configuration manifold. An output tracking problem is posed which results in a single equivalent rigid body attitude tracking problem with second order moment dynamics. The proposed controller is uniformly valid for all attitudes and is based on dynamic feedback linearization in a geometric control framework. Almost-global asymptotic stability of the desired equilibrium of the tracking error dynamics is shown. The efficacy of the controller is shown with numerical simulation and flight tests.

\end{abstract} 
\end{frontmatter}

%%%%%%%%%%%%%%%%%%%%%%%%%%%%%%%%%%%%%%%%%%%%%%%%%%%%%%%%%%%%%%%%%%%%%%%%%%%%%%%%%%%%%%%%%%
%%%%%%%%%%%%%%%%%%%%%%%%%%%%%%%%%%%%%%%%%%%%%%%%%%%%%%%%%%%%%%%%%%%%%%%%%%%%%%%%%%%%%%%%%%

\section{Introduction}

Vertical take-off and landing capable hybrid vehicles (VTOL) are gaining interest due to its unique ability to hover like a rotorcraft and fly efficiently long distances like a fixed-wing aircraft. They are envisaged to play the role of air-taxi for transportation in urban environment, emergency first-aid/medical supply vehicle, commercial package delivery drone etc. to name a few. Typical configuration of such vehicles include tilt-rotor, tilt-wing and tailsitter configurations. Unlike the remaining configurations, the tailsitter has the advantage of having no tilting mechanism which simplifies the mechanical complexity and its associated failure points. The vehicle transitions between the two operating modes ---  hover mode and cruise flight mode --- by performing a maneuver which reorients the entire vehicle by almost 90 deg. An attitude tracking controller which is valid for all orientations is critical towards achieving this goal.

The most popular version of a tailsitter is the flying wing with two propellers \cite{verling2016full, ritz2017global}. In the hover mode, the rolling moment is generated by differential thrust of motors while pitching and yawing moments are produced by the flaps operating in the propeller downwash. The control authority of the vehicle is relatively low in pitch axis as the moment arm of the aerodynamic force generated by the flaps is small. On the other hand, a  quadrotor tailsitter, such as monoplane type \cite{zhou2017unified} or biplane type \cite{hrishikeshavan2014design, hrishikeshavan2013control, swarnkar2018biplane, chipade2018systematic} (Fig. \ref{fig:biplane_axis}), generates rolling and pitching moment with differential thrust and therefore has good control authority about these axes. However, it has relatively low control authority in yaw as explained in Sec. \ref{sec:design_mot}. 

Previous work on rigid tailsitter vehicles have addressed several important problems pertaining to autonomous flight, namely attitude control  \cite{verling2016full, ritz2018global}, robust transition control \cite{naldi2013robust}, trajectory tracking control \cite{ritz2017global} and design of optimal transition maneuvers \cite{naldi2011optimal, verling2017model, maqsood2012optimization}. Verling et al. \cite{verling2016full} focused on developing an attitude control on $SO(3)$ for a flying wing tailsitter which is valid for all attitude configurations and operating modes. It also develops a model for mapping actuator inputs (flap deflection and motor speed) to forces and moments using wind tunnel data. In \cite{verling2017model}, the work was extended to perform optimal back transition (from cruise to hover) with a cost function penalizing the altitude deviation during the transition and the control input being pitch angle and thrust input. Ritz and D'Andrea \cite{ritz2018global} have developed a very innovative attitude controller for the flying wing tailsitter by solving for an optimal reorientation maneuver which minimizes the torque requirement about the weakest axes (pitch and yaw). The resulting optimal solutions for a rich set of initial conditions were stored using lookup table and the controller was validated using extensive experimentation. In \cite{ritz2017global}, they extended their work for trajectory tracking using a cascaded control architecture and provided experimental validation. The aerodynamic forces and moments were captured into a heuristic model, the parameters of which were estimated using online learning scheme. Zhuo et al. \citep{zhou2017unified} proposed a unified controller for hover, cruise and transition regimes by obtaining the desired thrust and attitude using a nonlinear solver. The work was done for a quadrotor monoplane tailsitter and validated in simulation. 
Naldi and Marconi \cite{naldi2011optimal} consider the problem of generating minimum time and energy optimal transition trajectories for tailsitters using numerical techniques. In \cite{naldi2013robust} they extend their work to design  a robust feedback controller for maintaining the vehicle withing a safe flight envelope in the presence of wind during the transition maneuver.

A novel tailsitter called the swiveling biplane-quadrotor (Fig. \ref{fig:biplane_exploded} and \ref{fig:biplane_model}) is introduced in this paper. The vehicle changes its shape by twisting the two wings about $e_X$-axis to produce control moment about $e_Z$-axis. The proposed vehicle is an improvement over the biplane-quadrotor (Fig. \ref{fig:biplane_axis}), first introduced by Hrishikeshavan et al. \citep{hrishikeshavan2014design}, in terms of aerodynamic efficiency and the ability to generate large control torque about body frame Z-axis.

\subsection{Design Motivation} \label{sec:design_mot}
A quadrotor produces the net thrust $T$ and body frame control torques $(M_x,M_y,M_z)$ by varying the speed of the four rotors ($\omega_i, i = 1..4$) and obeys the following relation (valid for near hover condition)
\begin{equation*}
\begin{bmatrix}
T \\
M_x \\
M_y \\
M_z
\end{bmatrix} = \begin{bmatrix}
k_f	& k_f	& k_f	& k_f \\
-k_f l & k_f l & k_f l & -k_f l \\
k_f l & -k_f l & k_f l & -k_f l \\
k_\tau & k_\tau & -k_\tau & -k_\tau
\end{bmatrix} \begin{bmatrix}
\omega_1^2 \\
\omega_2^2 \\
\omega_3^2 \\
\omega_4^2
\end{bmatrix}, \quad \text{where} \quad k_f l : k_\tau \sim 10:1,
\end{equation*}
$l$ is half length of motor separation, and $k_f$ and $k_\tau$ are respectively the thrust and torque coefficients of the rotor. The large difference in order of magnitude of thrust and torque coefficients in the above equation imply that the magnitude of torque generated about the body frame Z-axis, $M_z$, is about 10 times less that the ones produced about the roll and pitch axes for the same differential speed of rotors. 

\begin{figure}[!htbp]
\begin{center}
\includegraphics[width=0.5\linewidth]{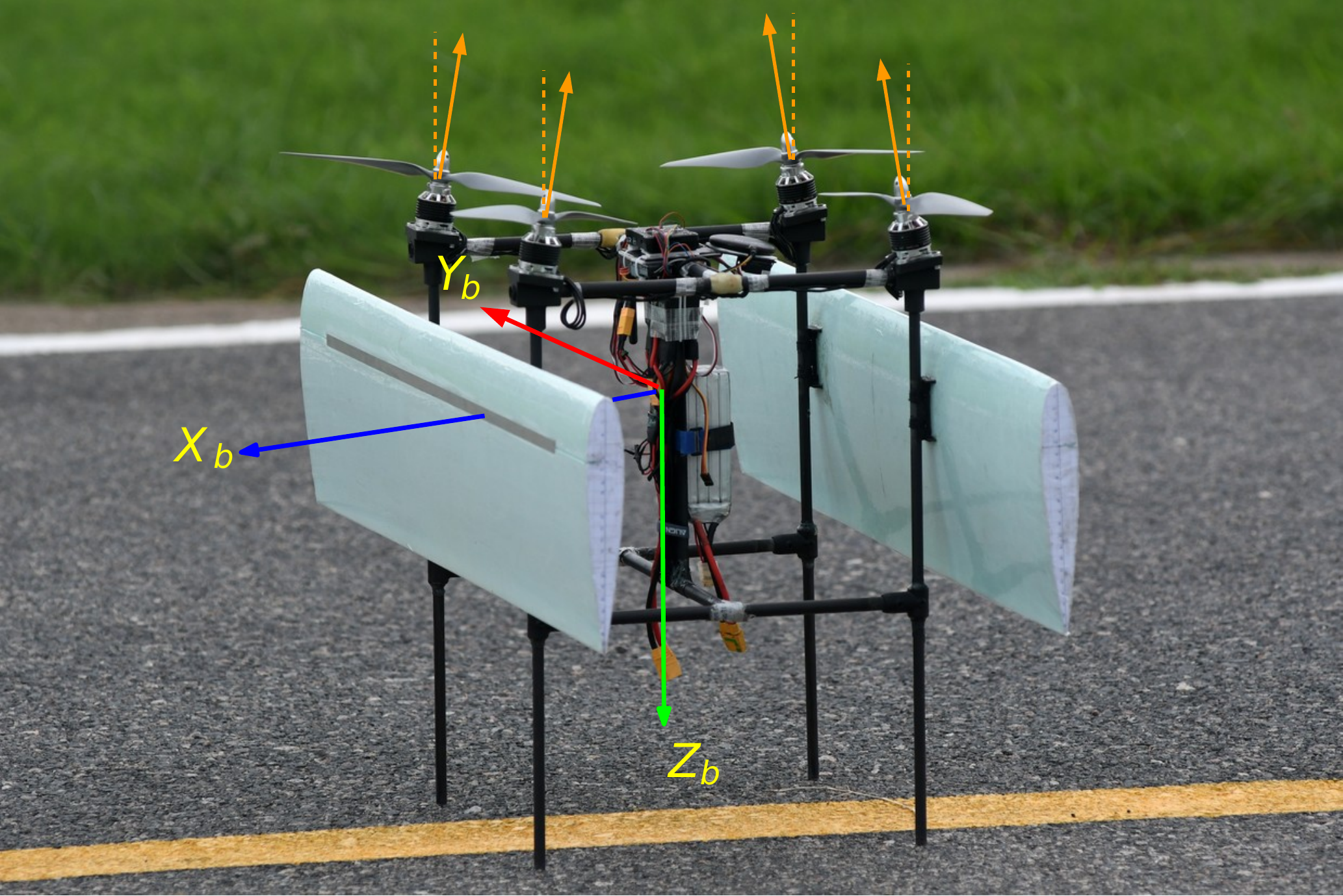}
\caption{Conventional biplane-quadrotor developed at IIT Kanpur with inward tilted motors for augmenting yaw control authority.}
\label{fig:biplane_axis}
\end{center}
\end{figure}

A conventional biplane-quadrotor is a regular quadrotor with two wings rigidly attached to it as shown in Fig. \ref{fig:biplane_axis}. It takes-off and lands like a quadrotor and transitions to/from level cruise flight (biplane mode) by pitching down/up about the $Y_b$-axis by 90 deg. The moment of inertia of the biplane-quadrotor about $Z_b$-axis is high and it experiences large aerodynamic damping moment about the body frame Z-axis. Further, the aerodynamic damping moment due to wing increases with forward speed in the biplane mode. Thus, high inertia, large aerodynamic damping added with low torque generation capability about $Z_b$ axis makes it less controllable about the axis. To improve the available Z-axis torque in the original design, the motors were permanently tilted inwards by 10 degree \cite{hrishikeshavan2013control} (see Fig. \ref{fig:biplane_axis}) and ailerons were introduced to augment the motor torques. However, the tilted motors reduce the hover efficiency and the use of ailerons make forward flight less efficient and require use of two additional actuators, increasing the complexity and reliability of this design. 

The swiveling biplane-quadrotor design does away with these two features and generates the Z-axis torque by tilting the two wings in opposite direction about body frame X-axis as shown in Fig. \ref{fig:biplane_model}. In order to realize this, the two wings are joined by a slender rod through a bearing, and they are tilted using the differential thrust of motors. This is different from other  methods used for morphing such as the ones described in \cite{ruben2017transformer, falanga2019foldable} where dedicated servo motors were used. The parameters of the swiveling biplane-quadrotor reveal that tilting the two wings by 15 degrees produces four times the maximum Z-axis torque attained by the previous design. Moreover, in the biplane-quadrotor design, producing a large Z-axis torque would make two diagonally-opposite motors operate at very low RPM, which in turn would make the vehicle less stable about the roll and pitch axes. In addition to the torque advantage, this design reduces the clutter between the wings with a single rod connection and makes the vehicle aerodynamically cleaner, unlike the conventional biplane-quadrotor. Further, building a conventional quadrotor by connecting the two wings with a slender rod without a bearing lowers the resonant frequency of the swiveling flexible mode due to low torsional stiffness to moment of inertia ratio (see attached video). This risks the structural integrity of the vehicle and lowers the control bandwidth. Introduction of the swiveling bearing makes the torsional stiffness zero thereby eliminating the resonance issues.

\subsection{Contribution}
The proposed vehicle has four degrees of freedom in the rotational configuration (3 DOF for one wing, 1 DOF for the relative angle between the wings), out of which only three are directly actuated (roll, pitch and swiveling motion). The yaw motion is not directly actuated by the motors which leads to an underactuated attitude control problem. The main contribution of this work, apart from introducing the novel vehicle, is in defining an attitude tracking control problem for the proposed vehicle and providing a practical solution.

The paper is organized as follows: First, the rotational dynamics of the vehicle is derived in Sec. \ref{sec:model} and the attitude tracking control problem and a solution to it is proposed in Sec. \ref{sec:att_control}. Next, the simulation study of the proposed controller under certain uncertainties in parameters is provided in Sec. \ref{sec:sim}. Finally, the proposed controller is  experimentally validated and its procedure and results are given in Sec. \ref{sec:exp}. 

%%%%%%%%%%%%%%%%%%%%%%%%%%%%%%%%%%%%%%%%%%%%%%%%%%%%%%%%%%%%%%%%%%%%%%%%%%%
%%%%%%%%%%%%%%%%%%%%%%%%%%%%%%%%%%%%%%%%%%%%%%%%%%%%%%%%%%%%%%%%%%%%%%%%%%%

\section{Vehicle Attitude Dynamics} \label{sec:model}

\begin{figure}[!htbp]
\begin{center}
\includegraphics[width=0.5\linewidth]{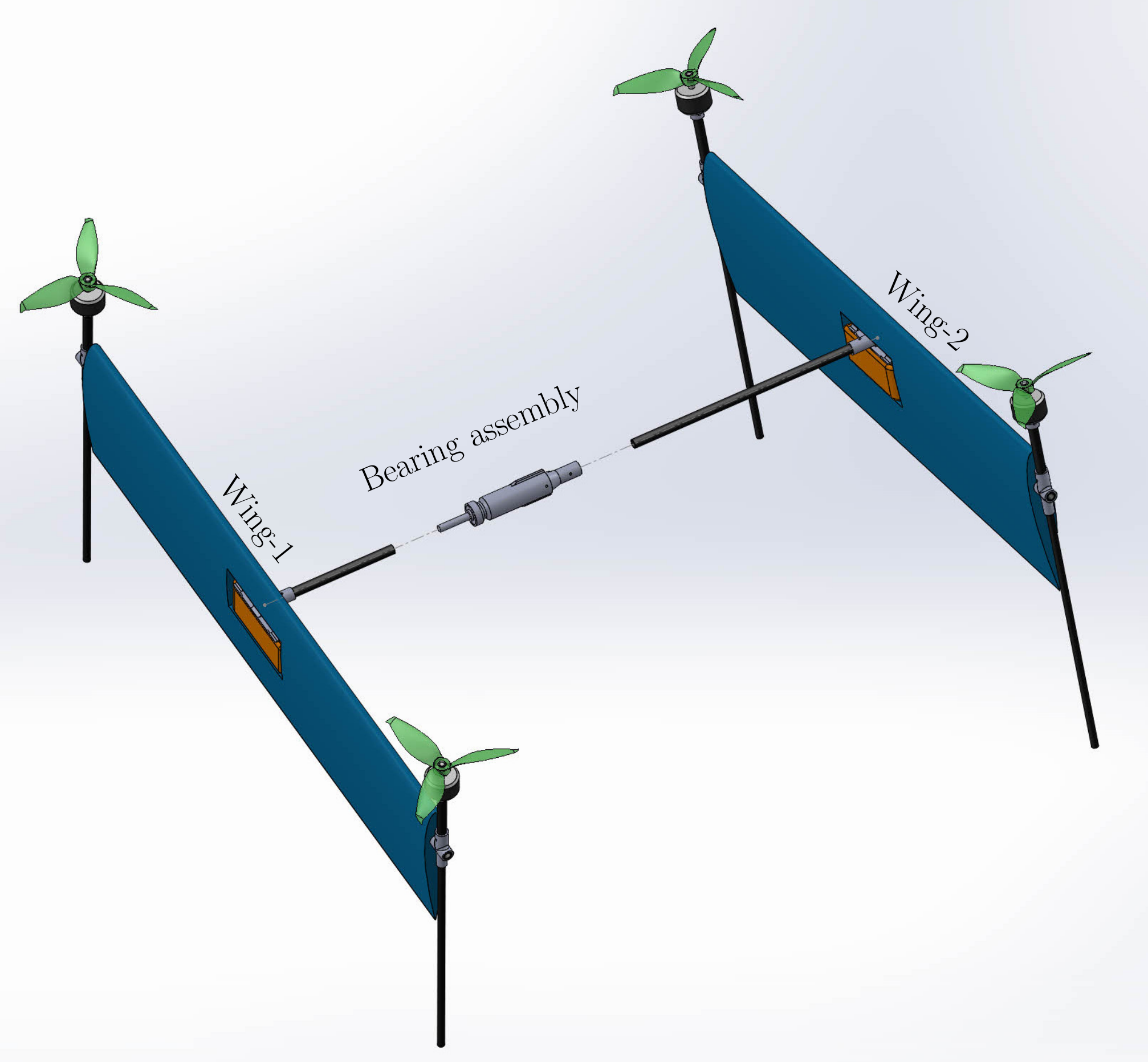}
\caption{Exploded view of swiveling biplane-quadrotor}
\label{fig:biplane_exploded}
\end{center}
\end{figure}

\begin{figure}[!htbp]
\begin{center}
\includegraphics[width=0.6\linewidth]{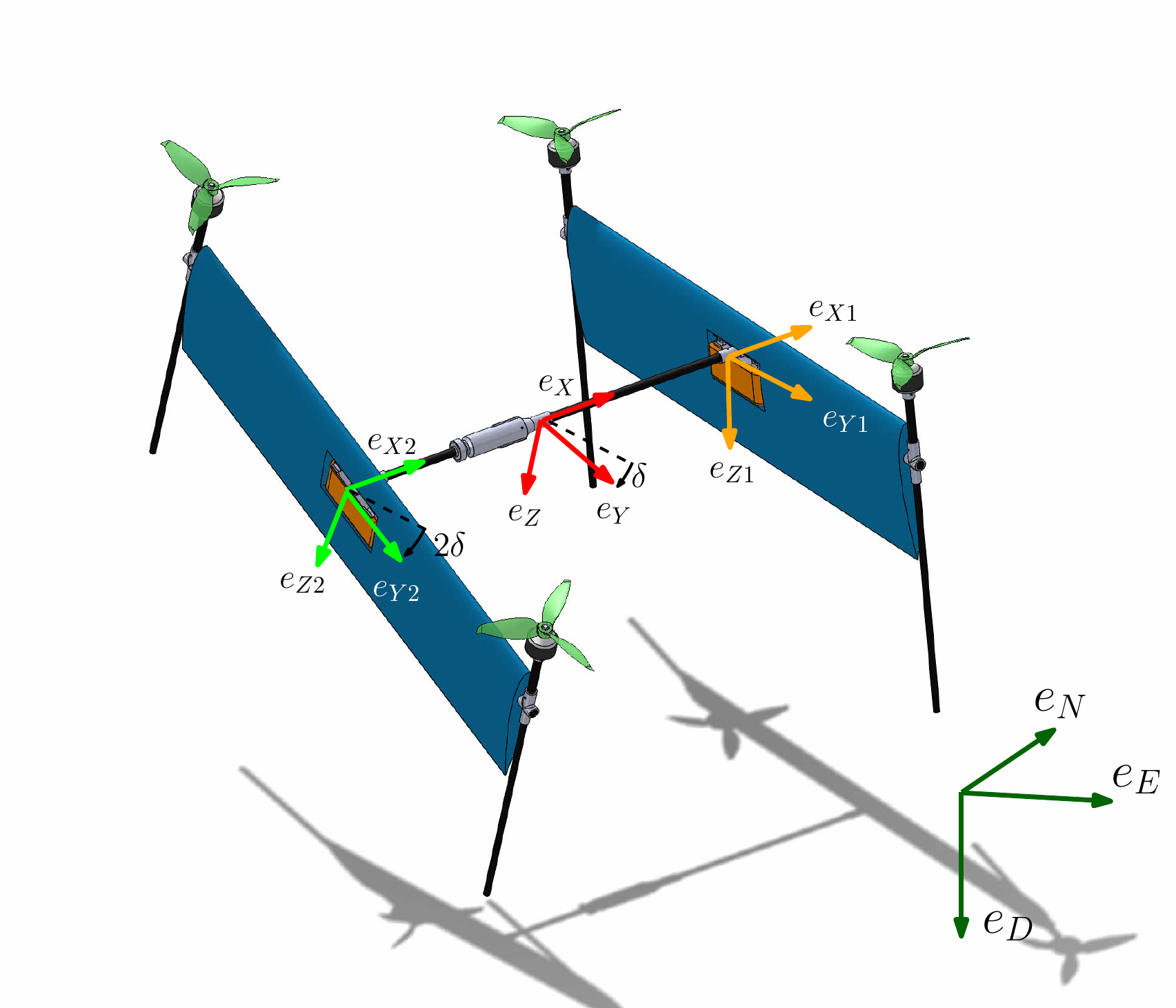}
\caption{Swiveling biplane-quadrotor with body and inertial frames}
\label{fig:biplane_model}
\end{center}
\end{figure}

The swiveling biplane quadrotor is made of two rigid wing-battery-motor  sub-substructures (henceforth called Wing-1 and Wing-2), each rigidly attached to a  slender rod as shown in Fig. \ref{fig:biplane_exploded}. The rods are attached to each other through a bearing assembly which allows relative rotational motion between the two wings about the rod axis. This results in a holonomic constraint and reduces the configuration space from $SO(3)\times SO(3)$ to $SO(3) \times S^1$. For developing the mathematical model of the vehicle, we approximate the rod to be massless and rigid enough to resist bending.

\subsection{Kinematics}
We define two orthogonal frames, Frame-1 $(e_{X1},e_{Y1},e_{Z1})$ and Frame-2 $(e_{X2},e_{Y2},e_{Z2})$ which form the principal inertia axes of Wing-1 and Wing-2 respectively. Defining swivel angle, $2\delta$, as the relative angle between the two frames about $e_{X1}$ axis. Let $R_1$ and $R_2$ denote respectively the rotation matrix from frames 1 and 2 to the inertial frame $(e_N,e_E,e_D)$. Let $R_{2\delta}$ denote the 1-dimensional rotation transformation from Frame-2 to Frame-1. The rotation   configuration of the vehicle could be entirely described by $(R_1,R_{2\delta})$.

The kinematics of the vehicle is given by,
\begin{equation}
\begin{split}
\dot{R}_1 = R_1 \hat{\omega}_1, \\
\dot{R}_{2\delta} = R_{2\delta} \hat{\omega}_{2\delta},
\end{split}
\end{equation}
where $\omega_1$ is the angular velocity of Frame-1 with respect to the inertial frame expressed in Frame-1, and $\omega_{2\delta} = [2\dot{\delta},0,0]$ is the swivel rate.  The hat map, $\hat{\cdot}$, is defined such that $\hat{a}b = a \times b $. Further, the angular velocity of Wing-2 may be expressed in Frame-2 as $\omega_2 = \omega_{2\delta} + R_{2 \delta}^T\omega_1$.

\begin{figure}[!htbp]
\begin{center}
\includegraphics[width=0.8\linewidth]{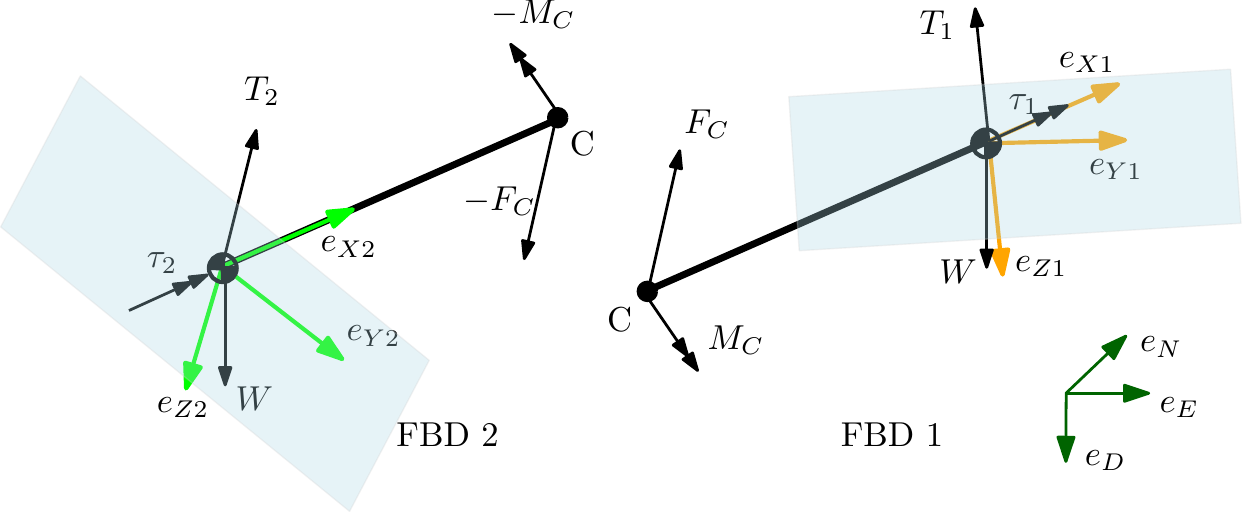}
\caption{Free body diagram of two wings separated at the combined center of mass C.}
\label{fig:fbd}
\end{center}
\end{figure}

\subsection{Dynamics}
Referring to the free body diagram 1 (FBD 1) shown in Fig. \ref{fig:fbd}, the angular momentum balance of Wing-1 about point C gives
\begin{equation}
\bm{\dot{H}_1} = \bm{J_1 \dot{\omega}_1} + \bm{\omega_1 \times J_1 \omega_1} = \bm{M_C} + \tau_1 \bm{e_{X1}} + l \bm{e_{X1}} \times (-T_1 \bm{e_{Z1}} + mg\bm{ e_D)}, 
\label{eq:AMB1}
\end{equation}
where $M_C$ is the reaction torque.
Similarly, from FBD 2, the angular momentum balance of Wing-2 about  point C gives
\begin{equation}
\bm{\dot{H}_2} = \bm{J_2 \dot{\omega}_2} + \bm{\omega_2 \times J_2 \omega_2} = -\bm{M_C} + \tau_2 \bm{e_{X2}} - l \bm{e_{X2}} \times (-T_2 \bm{e_{Z2}} + mg\bm{ e_D)}. 
\label{eq:AMB2} 
\end{equation}
The control input $T_i$ and $\tau_i$ represent the resultant thrust and torque about $e_{Xi}$-axis produced by the two motors on Wing-$i$, $i=1,2$. Note that, in the above equations, the terms in bold font are coordinate independent tensors and vectors. $\bm{J_1}$ and $\bm{J_2}$ are respectively the body-fixed moment of inertia tensors of Wing-1 and Wing-2 about point C. Using the constraint, $\bm{e_{X1}} = \bm{e_{X2}}$, and adding \eqref{eq:AMB1} and \eqref{eq:AMB2},
\begin{equation}
\bm{J_1 \dot{\omega}_1} + \bm{\omega_1 \times J_1 \omega_1} +
\bm{J_2 \dot{\omega}_2} + \bm{\omega_2 \times J_2 \omega_2} =
(\tau_1 + \tau_2)\bm{e_{X1}} + l (T_1\bm{e_{Y1}} - T_2\bm{e_{Y2}}).
\label{eq:dy1}
\end{equation}
Subtracting \eqref{eq:AMB1} from \eqref{eq:AMB2} and taking dot product with $\bm{e_{X1}}$, and using the free bearing condition $\bm{M_C} \cdot \bm{e_{X1}} = 0$, one obtains
\begin{equation}
(\bm{J_2 \dot{\omega}_2} + \bm{\omega_2 \times J_2 \omega_2} -
\bm{J_1 \dot{\omega}_1} - \bm{\omega_1 \times J_1 \omega_1}) \cdot \bm{e_{X1}} = \tau_2 - \tau_1.
\label{eq:dy2}
\end{equation}
Equations \eqref{eq:dy1} and \eqref{eq:dy2} constitute the coordinate system independent dynamics of the swiveling biplane quadrotor. 

For the purpose of designing an attitude controller, we derive the equation of motion in an intermediate frame, denoted by Frame-0 $(e_{X},e_{Y},e_{Z})$ (see Fig. \ref{fig:biplane_model}). Here, $e_X \triangleq e_{X1}$, $e_Y \triangleq (e_{Y1} + e_{Y2}) / \norm{(e_{Y1} + e_{Y2})}$, and $e_Z \triangleq (e_{Z1} + e_{Z2}) / \norm{(e_{Z1} + e_{Z2})}$. This frame suffers from singularity when $\delta = \pm 90$ degree as at this configuration $e_{Y1} + e_{Y2} = 0$ and $e_{Z1} + e_{Z2} = 0$. The singularity is far away from the normal operating range of vehicle with $\ \mid \delta \mid < \delta_{max} = 30 \deg $. Let $R$ denote the rotation matrix from Frame-0 to the inertial frame with the corresponding angular velocity $\omega$, then $\dot{R} = R \hat{\omega}$. The angular velocity of Frame-1 and Frame-2 can be expressed in terms of $\omega$ and $\omega_\delta = [\dot{\delta},0,0]$ as $\omega_1 = R_\delta \omega - \omega_\delta$ and $\omega_2 = R_\delta^T \omega + \omega_\delta$. The angular momentum of Wing-1 and Wing-2 about the point C expressed in Frame-0 are given by
\begin{equation}
\begin{split}
H_1 = R_\delta ^T J_1 \omega_1 = R_\delta ^T J_1 (R_\delta \omega - \omega_\delta) \\
H_2 = R_\delta  J_2 \omega_2 = R_\delta J_2 (R_\delta^T \omega + \omega_\delta)
\end{split}
\end{equation}
where $J_1$ and $J_2$ are respectively the representations of $\bm{J_1}$ and $\bm{J_2}$ in frames parallel to Frame-1 and Frame-2. Since Wing-1 and Wing-2 are identical in construction, $J_1 = J_2 = \text{diag}(J_{xx}, J_{yy}, J_{zz})$.
Then the total angular momentum about point C simplifies to 
\begin{equation}
H = H_1 + H_2 = \underbrace{(R_\delta ^T J_1 R_\delta + R_\delta J_2 R_\delta ^T)}_{J(\delta)} \omega = J(\delta) \omega
\end{equation}
with the equivalent inertia matrix
\begin{equation} \label{eq:inertia}
J(\delta) = 
 \begin{bmatrix}
    2J_{xx}       & 0 & 0 \\
    0       & 2\cos^2(\delta)J_{yy} + 2\sin^2(\delta)J_{zz} & 0 \\
    0       & 0 & 2\cos^2(\delta)J_{zz} + 2\sin^2(\delta)J_{yy}
\end{bmatrix}.
\end{equation}
To further utilize the symmetry in the system, the control input torques and thrusts are expressed in terms of their mean and differential components as
\begin{equation}
\begin{split}
T_1 = T_m + T_\Delta, \quad T_2 = T_m - T_\Delta \\
\tau_1 = \tau_m - \tau_\Delta, \quad \tau_2 = \tau_m + \tau_\Delta.
\end{split}
\end{equation} 
In terms of the new control inputs, \eqref{eq:dy1} is represented in Frame-0 from $\dot{H}_2 + \dot{H}_1$ as
\begin{equation} \label{eq:R_dyn}
J(\delta) \dot{\omega} + \dot{J}(\delta) \omega + \omega \times J(\delta) \omega = M
\end{equation}
where $M = [2\tau_m, 2lT_\Delta \cos(\delta), -2lT_m \sin(\delta) ]$. Similarly, \eqref{eq:dy2} is obtained in Frame-0 from $\dot{H}_2 - \dot{H}_1$ as
\begin{equation} \label{eq:del_dyn}
2J_{xx} \ddot{\delta} + \sin(2\delta)(J_{yy} - J_{zz})(\omega_y^2 - \omega_z^2) = 2\tau_\Delta
\end{equation}

\begin{remark}
From \eqref{eq:inertia}, it is worth noting that \textit{when $J_{yy} = J_{zz}$, the inertia tensor is invariant with respect to $\delta$}. Making the difference between $J_{yy}$ and $J_{zz}$ small enough could be set as a vehicle design requirement. For the vehicle configuration shown in Fig \ref{fig:biplane_model}, used for both simulation and experiments, $J_{zz} - J_{yy}$ is $0.4J_{zz}$.
\end{remark}

\section{Attitude Controller} \label{sec:att_control}

In this section, we present the attitude tracking problem and a controller for the swiveling biplane quadrotor. There are four degrees of freedom for actuation: one for total thrust, $T_m$, which is used for translational control and the remaining are three independent torques. There are four degrees of freedom for rotational configuration out of which only three are directly actuated ($\omega_z$ is underactuated). This makes the system underactuated in the rotational configuration. Nevertheless, it is possible to track an output whose dimension is equal to that of the input. We choose the attitude of the virtual frame $(e_X,e_Y,e_Z)$ (Fig. \ref{fig:biplane_model}) as the output to be tracked, since $e_Z$ represents the total  thrust direction, which would make the translational control of the vehicle analogous to that of a regular quadrotor. In subsection \ref{subsec:att_des} a method of specifying the desired attitude trajectory which facilitates easy manual transition is presented. The next subsection presents the attitude tracking controller and stability results. 

\subsection{Reference Attitude Trajectory} \label{subsec:att_des}
\begin{figure}[!htbp]
\begin{center}
\includegraphics[width=0.9\linewidth]{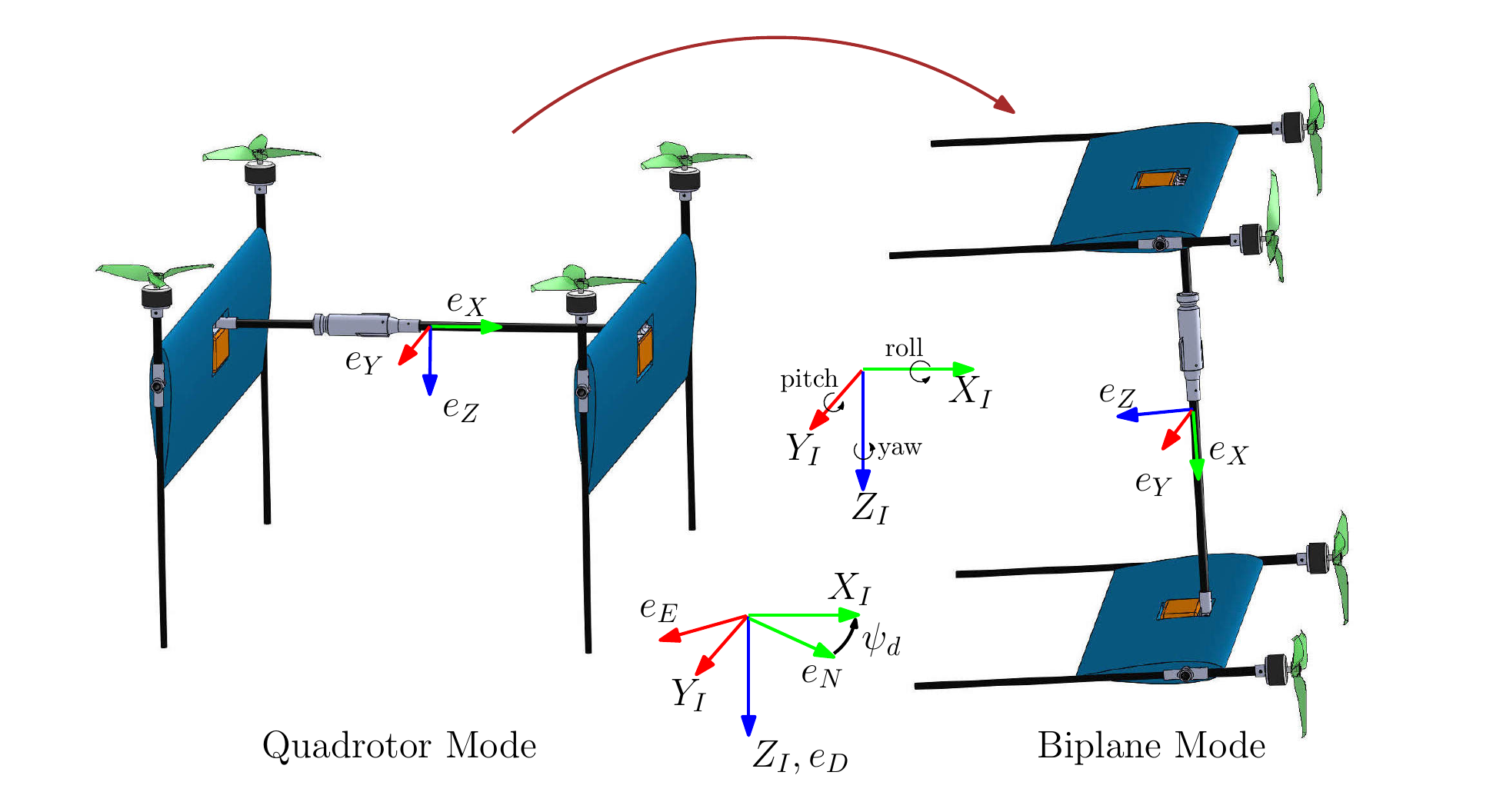}
\caption{Swapping of roll and yaw axis as the vehicle transitions. In the quadrotor mode, roll command should rotate the vehicle about body fixed $e_X$ axis, whereas in the biplane mode the same command should rotate it about body fixed $e_Z$ axis. A similar swapping happens for the yaw command. This makes manual piloting difficult during the transition phase, if one chooses to use just a rate stabilized controller.}
\label{fig:mode_change}
\end{center}
\end{figure}
For a single attitude controller to be valid uniformly for both the quadrotor mode and biplane forward flight mode and throughout the transition it is necessary to have a singularity free expression for the desired attitude. In this regard, the desired rotation matrix is expressed as a function of the 312 Euler angle sequence with the following three advantages facilitating easy manual flight.
\begin{enumerate}
\item In the manual flight mode, the desired attitude or angular rate is specified using stick input from the pilot. If the pilot flies the vehicle in angular rate stabilized mode, the stick input would correspond to body frame angular rates. In the quadrotor mode a roll-rate command would specify angular rate about $e_X$ axis as shown in Fig. \ref{fig:mode_change}, which would roll the vehicle about $e_X$ axis as expected. However, in the biplane mode the same roll rate command would rotate the vehicle about $e_X$ axis which corresponds to a yaw motion. Instead,  if the pilot stick input corresponds to 312 Euler angle, the roll and yaw commands are rotations about inertial frame $Z_I$ axis and a horizontal axis $X_I$ as shown in Fig. \ref{fig:mode_change}. This results in an intuitive correspondence of the pilot stick input with physical rotation of the vehicle in both the flight modes and throughout the transition.
\item The conventionally used 321 Euler angle has a singularity (gimbal lock) at pitch angle of $\pm 90$ degree, which is an operating point of the vehicle. Although the 312 sequence has singularity at roll angles of $\pm 90$ degree, we restrict the commanded roll angle to operate well within this range.
\item Another advantage of 312 sequence is that for a given pitch angle, a change in roll command rotates the body frame X axis around a cone, thus maintaining the angle of attack during a level flight as shown in Fig. \ref{fig:Euler312}. This facilitates an easy manual control of the vehicle during transition and biplane mode unlike the 321 Euler sequence wherein a roll command rotates the vehicle about body fixed $e_X$-axis, which in turn introduces side slip.
\end{enumerate}
\begin{figure}[!htbp]
\begin{center}
\includegraphics[width=0.5\linewidth]{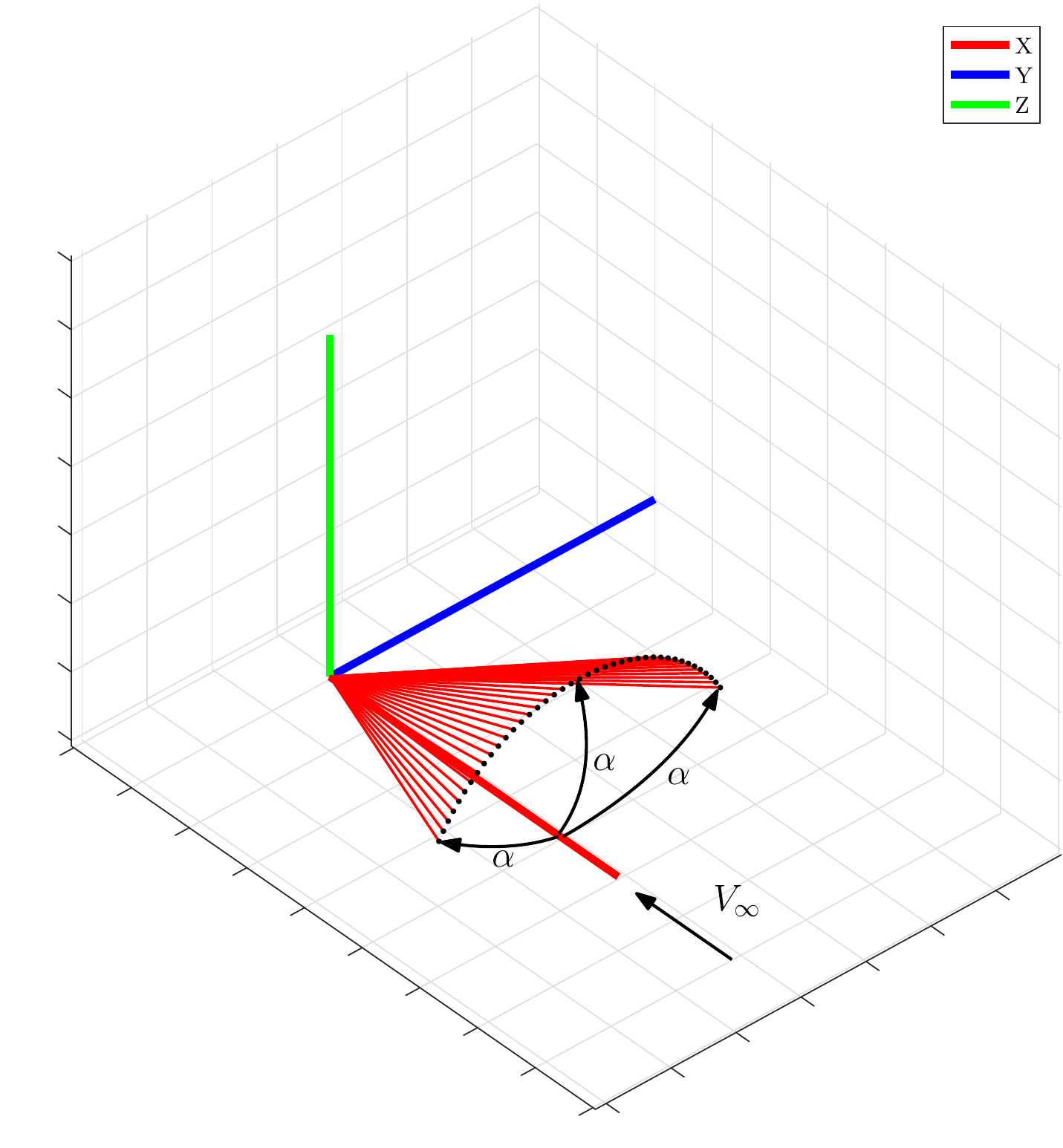}
\caption{Body frame X-axis, $e_X$, traces out a cone maintaining constant AoA, as 312 Euler roll angle is varied from -60 to 60 deg while pitch is kept at 30 deg. A roll variation in 312 Euler sequence rotates the body frame about a horizontal axis, whereas in 321 Euler sequence the body frame rotates about body frame X-axis.}
\label{fig:Euler312}
\end{center}
\end{figure}

\noindent The desired rotation matrix $R_d$ in terms of the desired 312 Euler angle $(\psi_d,\phi_d,\theta_d)$ is given by
\begin{equation}
R_d =  \begin{bmatrix}
C_{\theta_d} C_{\psi_d} -S_{\phi_d} S_{\theta_d} S_{\psi_d} 
&-C_{\phi_d} S_{\psi_d} 
&S_{\theta_d} C_{\psi_d} +S_{\phi_d} C_{\theta_d} S_{\psi_d} \\ 
C_{\theta_d} S_{\psi_d} +S_{\phi_d} S_{\theta_d} C_{\psi_d} 
&C_{\phi_d} C_{\psi_d} 
&S_{\theta_d} S_{\psi_d} -S_{\phi_d} C_{\theta_d} C_{\psi_d} \\ 
-C_{\phi_d} S_{\theta_d} 
&S_{\phi_d} 
&C_{\phi_d} C_{\theta_d}
\end{bmatrix} 
\end{equation}
where $C_{\ast}$, $S_{\ast}$ are short for $\cos(\ast)$ and $\sin(\ast)$.
Here, the desired pitch angle $\theta_d = \theta_c + \theta_{tr}$, where $\theta_c$ is the stick input from the pilot and $\theta_{tr}$ is trim input which could be assigned to a separate knob to continuously vary the pitch from $0$ to $90 \deg$ during the transition.  

\subsection{Attitude Tracking Controller} \label{subsec:att_con}
The objective of the control design is for the virtual frame attitude $R$ to track a sufficiently smooth reference command $R_d(t)$. The model, given by \eqref{eq:R_dyn} and \eqref{eq:del_dyn}, does not allow for conventional output tracking control techniques, such as dynamics feedback linearization \cite{isidori2013nonlinear} to be applied due to a singularity at the nominal operating point of $\delta = 0$ configuration. The singularity arises due the the configuration dependent inertia, $J(\delta)$, and could be observed in an attempt to perform dynamic feedback linearization as follows. Consider the last component of \eqref{eq:R_dyn}, corresponding to $M_z$
\begin{equation} \label{eq:Mzdd}
2(\cos^2(\delta) J_{zz} + \sin^2(\delta) J_{yy})\dot{\omega}_z + 2\dot{\delta} \sin{2\delta}(J_{yy}-J_{zz})\omega_z + (2\cos^2(\delta) J_{yy} + 2 \sin^2(\delta) J_{zz} - J_{xx})\omega_y \omega_x = M_z.
\end{equation}
Since $M_z$ is not a control input, following the procedure for feedback linearization, differentiate the above equation until an input explicitly appears. It is observed that differentiating it once results in the control input $\tau_\Delta$ appearing through $\ddot{\delta}$ in the second term of \eqref{eq:Mzdd} as $2\ddot{\delta} \sin(2\delta)(J_{yy} -J_{zz})\omega_z$. The coefficient $\sin(2\delta)$ of $\tau_\Delta$ vanishing at $\delta = 0$ resulting in the singularity.

Therefore, to circumvent this for the purpose of controller design, we use a nominal model with constant inertia tensor associated with the $\delta = 0$ configuration. This reduces \eqref{eq:R_dyn} to 
\begin{equation}\label{eq:R_dyn_control}
J \dot{\omega} + \omega \times J \omega = M,
\end{equation}
where $J = \text{diag}(2J_{xx},2J_{yy},2J_{zz})$.
This model is valid for the entire operational range of $\delta$ as it is limited to $\pm 30 \deg$ and any error due to this approximation is expected to be handled by the  feedback controller. 

The $z$ component of $M$ in \eqref{eq:R_dyn_control}, $M_z$, is not a control parameter and is dependent of $\delta$ which has a second order dynamics. The total thrust $T_m$ is modified as $\frac{T_0}{\cos(\delta)}$ to obtain the following diffeomorphism between $\delta \in (-\pi/2, \pi/2)$ and  $M_z \in \mathbb{R}$ as
\begin{equation}
M_z = -2lT_0 \tan(\delta),
\end{equation}
with their derivatives,
\begin{equation} \label{eq:M_z_rel}
\begin{aligned}
\dot{M}_z &= -2lT_0 \sec ^2(\delta) \dot{\delta}, \\
\ddot{M}_z &= -4lT_0 \sec ^2(\delta) \tan(\delta) \dot{\delta}^2 -2lT_0 \sec ^2(\delta) \ddot{\delta}.
\end{aligned}
\end{equation}
This modified thrust also ensures that the component of thrust along $\bold{e_Z}$ does not change with $\delta$. Feedback linearization of $\delta$ dynamics \eqref{eq:del_dyn} would result in $\ddot{\delta} = v_z$ with 
\begin{equation}
v_z =  \frac{\tau_\Delta}{J_{xx}} - \frac{J_{yy} - J_{zz}}{2J_{xx}}\sin(2\delta)(\omega_y^2 - \omega_z^2).
\end{equation}
Using the above in \eqref{eq:M_z_rel} would result in $\ddot{M}_z = u_z$ with the new control input
\begin{equation}
u_z = -4lT_0 \sec ^2(\delta) \tan(\delta) \dot{\delta}^2 -2lT_0 \sec ^2(\delta) v_z
\end{equation}   
In order to achieve a vector relative degree for this MIMO system, the dynamics of $M_x$ and $M_y$ are extended as a double integrator. For a detailed exposition of dynamic feedback linearization and the concept of vector relative degree the reader is referred to Isidori (Sec 5.4 of \cite{isidori2013nonlinear}). This results in the following nominal model 
\begin{subequations} \label{eq:control_dynamics}
\begin{equation} \label{eq:rigid_body_a}
\dot{R} = R \hat{\omega},
\end{equation}
\begin{equation} \label{eq:rigid_body_b}
J \dot{\omega} + \omega \times J \omega = M,
\end{equation}
\begin{equation}  \label{eq:moment_dyn}
\ddot{M} = u,
\end{equation}
\end{subequations}
where $M = [M_x,M_y,M_z]^T$ and $u = [u_x,u_y,u_z]^T$ is the new control input. Since the sum of vector relative degree about each axes (4+4+4) is equal to the dimension of the state space of the extended system, the feedback linearization has trivial zero dynamics \cite{isidori2013nonlinear}. The authors' previous work on attitude tracking of aerobatic helicopters \cite{raj2018robust} has resulted in rigid body dynamics augmented with first order moment dynamics.

Equations \eqref{eq:rigid_body_a} and \eqref{eq:rigid_body_b} represent the rigid body dynamics and the solution to its tracking problem has been obtained previously with geometric control techniques \citep{maithripala2006almost, koditschek1989application, chaturvedi2009asymptotic, bayadi2014almost, lee2011robust_adaptive}. Since the equations are in strict feedback form, one could use backstepping technique to design a tracking controller. Although it is easier to show stability results with backstepping technique, there are two disadvantages with it: 1) this would lead to introduction of certain control terms which are solely meant to make the derivative of Lyapunov function negative definite but otherwise not useful, and 2) gain selection could be relatively difficult. Instead, we adopt a more intuitive, two stage design process owing to the special structure of the second order moment dynamics. First, an attitude tracking controller is designed for the rigid body using existing geometric control technique to obtain the desired control moment $M_d$. Next, M is made to track $M_d$ by assigning it the familiar spring-mass-damper structure using the control input $u$. 

First, to design the controller for rigid body, the result by Maithripala  \citep{maithripala2006almost} is used which states that any tracking problem of a fully actuated mechanical system on a Lie group could be converted to a stabilization problem. The stabilization problem for a mechanical system defined on a manifold is solved with a proportional-derivative (PD) control structure \citep{koditschek1989application}. The proportional action is derived from the gradient of a \textit{configuration error function} $\psi : SO(3) \to \mathbb{R}$ which satisfies the following lemma.
\begin{lemma} [Chillingworth \cite{chillingworth1982symmetry}]
If $P \in \mathbb{R}^{3 \times 3}$ is symmetric positive definite with distinct eigenvalues, then the configuration error function
\begin{equation}
\psi \triangleq \frac{1}{2} tr(P(I-R_e))
\end{equation}
has exactly 4 critical points.
\end{lemma}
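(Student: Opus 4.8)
The plan is to locate the critical set of $\psi$ by a direct computation. First I would take a variation $R_e(t) = R_e\exp(t\hat\eta)$, $\eta \in \mathbb{R}^3$, and differentiate $\psi(R_e) = \half\,tr(P) - \half\,tr(PR_e)$. Using the elementary identity $tr(A\hat\eta) = -2\,(\mathrm{skew}\,A)^{\vee}\!\cdot\eta$ valid for any $A\in\mathbb{R}^{3\times3}$ — where $\mathrm{skew}\,A = \half(A-A^{T})$ and $(\cdot)^{\vee}$ is the inverse of the hat map — one finds
\begin{equation*}
\left.\tfrac{d}{dt}\right|_{t=0}\psi\big(R_e(t)\big) \;=\; \big(\mathrm{skew}(PR_e)\big)^{\vee}\!\cdot\eta \;=\; \half\,\big(PR_e - R_e^{T}P\big)^{\vee}\!\cdot\eta .
\end{equation*}
Hence $R_e$ is a critical point of $\psi$ if and only if $PR_e$ is symmetric, i.e.\ $PR_e = R_e^{T}P$.

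Next I would reduce to a diagonal $P$. Writing $P = Q\Lambda Q^{T}$ with $Q\in SO(3)$ and $\Lambda = \mathrm{diag}(\lambda_1,\lambda_2,\lambda_3)$ having distinct positive entries, the substitution $\widetilde R = Q^{T}R_eQ$ is a diffeomorphism of $SO(3)$ under which $\psi(R_e) = \half\,tr\big(\Lambda(I-\widetilde R)\big)$, so it suffices to count critical points for $P=\Lambda$. Parametrising a rotation by axis and angle, $R_e = I + \sin\theta\,\hat a + (1-\cos\theta)\,\hat a^{2}$ with $|a|=1$ and $\theta\in[0,\pi]$, and using $\hat a^{2}=aa^{T}-I$, a short calculation gives
\begin{equation*}
\big(\Lambda R_e - R_e^{T}\Lambda\big)^{\vee}
= \sin\theta\begin{bmatrix}(\lambda_2+\lambda_3)\,a_1\\ (\lambda_1+\lambda_3)\,a_2\\ (\lambda_1+\lambda_2)\,a_3\end{bmatrix}
+ (1-\cos\theta)\begin{bmatrix}(\lambda_3-\lambda_2)\,a_2a_3\\ (\lambda_1-\lambda_3)\,a_1a_3\\ (\lambda_2-\lambda_1)\,a_1a_2\end{bmatrix},
\end{equation*}
and the critical points are exactly the pairs $(\theta,a)$ that make the right-hand side vanish.

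The heart of the proof is a case split on $\theta$, and I expect the regime $0<\theta<\pi$ to be the main obstacle. For $\theta=0$ the expression vanishes identically, giving the single critical point $R_e=I$. For $\theta=\pi$ the $\sin\theta$ term drops and, because the $\lambda_i$ are distinct, vanishing of the remaining term forces $a_1a_2=a_2a_3=a_3a_1=0$; hence $a\in\{\pm e_1,\pm e_2,\pm e_3\}$, which yields the three rotations by $\pi$ about the coordinate axes (equivalently, about the eigenvectors of $P$). For $0<\theta<\pi$ both $\sin\theta$ and $1-\cos\theta$ are nonzero, and I would rule out any solution: if some $a_i=0$, the corresponding component forces a product $a_ja_k$ to vanish, and then a second component — whose coefficient $\lambda_\ell+\lambda_m$ is strictly positive by positive-definiteness — forces the last entry to vanish as well, contradicting $|a|=1$. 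So assume all $a_i\neq0$; multiplying the $i$-th component equation by $a_i$ gives, for each cyclic $(i,j,k)$,
\begin{equation*}
(\lambda_j+\lambda_k)\,a_i^{2}\,\sin\theta = -(\lambda_k-\lambda_j)\,(1-\cos\theta)\,a_1a_2a_3 .
\end{equation*}
Since $a_i^{2}>0$, $\sin\theta>0$, $1-\cos\theta>0$, and each $\lambda_j+\lambda_k>0$, the three numbers $\lambda_3-\lambda_2$, $\lambda_1-\lambda_3$, $\lambda_2-\lambda_1$ would all have to carry the same sign (the one opposite to $a_1a_2a_3$) — impossible, since they are nonzero (distinct eigenvalues) and sum to zero. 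Thus $\psi$ has exactly the four critical points $I$ and the three $\pi$-rotations about the eigenvectors of $P$.

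As a consistency check I would note that these four points have distinct critical values, that $\psi$ is a Morse function on $SO(3)\cong\mathbb{RP}^{3}$, and that four is the least number of critical points allowed by its $\mathbb{Z}_2$ Betti numbers $1,1,1,1$; here $I$ is the nondegenerate global minimum (indeed $\psi\ge0$ with equality only at $I$, since $tr(PR_e)\le tr(P)$ on $SO(3)$), while the three $\pi$-rotations realise the two saddles and the maximum.
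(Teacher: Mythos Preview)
Your argument is correct. The derivative computation, the reduction to diagonal $P$, the axis--angle expression for $(\Lambda R_e - R_e^{T}\Lambda)^{\vee}$, and the three-case split all check out; in particular, the sign-sum obstruction you use for $0<\theta<\pi$ is a clean way to exclude spurious critical points, and the identification of the remaining four with $I$ and the three $\pi$-rotations about the eigenvectors of $P$ matches the set $\Theta$ the paper records.

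There is nothing to compare against, however: the paper does not prove this lemma. It simply states the result with a citation to Chillingworth and then lists the critical points $\Theta=\{I,e^{\pi\hat v_1},e^{\pi\hat v_2},e^{\pi\hat v_3}\}$ with a reference to Bullo and Lewis. Your direct, self-contained computation therefore supplies strictly more than the paper does, and the Morse-theoretic remark at the end (four critical points realising the $\mathbb{Z}_2$ Betti numbers of $\mathbb{RP}^{3}$) is a nice sanity check that is likewise absent from the paper.
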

\noindent The critical points of $\psi_m$ are $\Theta = \{I, e^{\pi \hat{v}_1}, e^{\pi \hat{v}_2}, e^{\pi \hat{v}_3} \}$, where $v_1, v_2, v_3$ are the eigenvectors of $P$ (p. 553 \citep{bullo2004geometric}). The gradient of $\psi$ when pulled back to the Lie algebra is given by
\begin{equation}
e_R = \frac{1}{2}(PR_e - R_e^T P),
\end{equation}
and $e_R = 0$ when $R_e \in \Theta$.

Given a smooth reference attitude trajectory $R_d(t)$ which satisfies $\dot{R}_d = R_d \omega_d$, the rotation error is defined as $R_e \triangleq R_d^T R$ and satisfies $\dot{R}_e = R_e \hat{e}_\omega$, where $e_\omega = \omega - R_e^T \omega_d$. Then, the tracking error dynamics in terms of the error configuration is given by 
\begin{equation} \label{eq:rigid_body_err_dyn}
\begin{gathered}
\dot{R}_e = R_e \hat{e}_\omega \\
J \dot{e}_\omega = -\omega \times J\omega + J(\hat{e}_\omega R_e^T \omega_d - R_e^T \dot{\omega}_d) + M
\end{gathered}
\end{equation}
The following lemma gives the attitude tracking controller for rigid body dynamics.
\begin{lemma} [Rigid body attitude tracking] 
For $k_R>0$, $k_\omega > 0$ the control moment
\begin{equation} \label{eq:rigid_body_control}
M = -k_R e_R - k_\omega e_\omega + \omega \times J\omega -  J(\hat{e}_\omega R_e^T \omega_d - R_e^T \dot{\omega}_d)
\end{equation}
renders the desired equilibrium $(I,0)$ of the error dynamics \eqref{eq:rigid_body_err_dyn} almost-globally asymptotically stable.
\end{lemma}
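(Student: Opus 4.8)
The plan is to first close the loop, reducing the error dynamics \eqref{eq:rigid_body_err_dyn} to a pure proportional--derivative system on $SO(3)\times\mathbb{R}^{3}$, then to run the standard geometric Lyapunov/LaSalle argument, and finally to upgrade ``global attractivity of the critical set'' to ``almost-global asymptotic stability of $(I,0)$'' through a saddle-point analysis at the three unwanted equilibria. Substituting \eqref{eq:rigid_body_control} into the $e_\omega$-equation of \eqref{eq:rigid_body_err_dyn}, the feedforward terms $\omega\times J\omega$ and $J(\hat e_\omega R_e^{T}\omega_d - R_e^{T}\dot\omega_d)$ cancel exactly, leaving
\begin{equation*}
\dot R_e = R_e\hat e_\omega, \qquad J\dot e_\omega = -k_R e_R - k_\omega e_\omega,
\end{equation*}
where $e_R$ is the pulled-back differential of $\psi(R_e)=\tfrac12\operatorname{tr}\big(P(I-R_e)\big)$, with $P$ symmetric positive definite with distinct eigenvalues as in Lemma~1. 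Since $SO(3)$ is compact there is no issue of forward completeness.

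Next I would take the Lyapunov function $V = \tfrac12\, e_\omega^{T} J e_\omega + k_R\,\psi(R_e)$. Positive definiteness of $P$ gives $\psi\ge 0$ with equality only at $R_e = I$, so $V\ge 0$, $V=0$ iff $(R_e,e_\omega)=(I,0)$, and $V$ is proper (its sublevel sets are compact because the $e_\omega$-part is quadratic and $\psi$ is bounded). Using the gradient identity $\dot\psi = e_R\cdot e_\omega$ along $\dot R_e = R_e\hat e_\omega$ and the closed loop,
\begin{equation*}
\dot V = e_\omega^{T}\big(-k_R e_R - k_\omega e_\omega\big) + k_R\, e_R\cdot e_\omega = -k_\omega\,\norm{e_\omega}^{2}\le 0 .
\end{equation*}
Every trajectory is thus bounded, and LaSalle's invariance principle forces convergence to the largest invariant subset of $\{e_\omega=0\}$; on that set $\dot e_\omega\equiv0$ implies $e_R=0$, i.e.\ $R_e\in\Theta=\{I,\,e^{\pi\hat v_1},\,e^{\pi\hat v_2},\,e^{\pi\hat v_3}\}$, so every solution converges to one of the four equilibria $(R_e,0)$ with $R_e\in\Theta$.

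To single out $(I,0)$, I would then carry out a local analysis at each of the four equilibria. At $R_e=I$ the Hessian of $\psi$ in exponential coordinates is positive definite, so $V$ is a strict local Lyapunov function and $(I,0)$ is locally asymptotically stable. At each $e^{\pi\hat v_i}$, $\psi$ has a nondegenerate critical point of Morse index at least one --- a Morse function on $SO(3)\cong\mathbb{RP}^{3}$ with exactly four critical points must have indices $0,1,2,3$ --- and linearizing the closed loop there, each negative Hessian eigenvalue $\mu$ produces a factor of the form $\lambda^{2}+k_\omega\lambda+k_R\mu$ (up to positive scaling by $J$) in the characteristic polynomial, hence a root with positive real part. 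Each of these three equilibria is therefore hyperbolic and unstable, with stable manifold an embedded submanifold of dimension at most $5$, hence of measure zero in the $6$-dimensional state space. The region of attraction of $(I,0)$ then contains the complement of the finite union of these negligible stable manifolds and is open, dense and of full measure: this is almost-global asymptotic stability. (Once the closed loop is reduced to the PD form above, steps of this kind are carried out in \cite{koditschek1989application,maithripala2006almost,chaturvedi2009asymptotic,bullo2004geometric} and may also simply be invoked.)

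The main obstacle is this last, saddle-point step: the Lyapunov decrease only gives convergence to the four-point critical set and does not by itself exclude the three undesired equilibria; one must compute the Morse index of each critical point of $\psi$, relate it to an unstable eigenvalue of the linearized flow, and then argue that the union of the three stable manifolds is Lebesgue-negligible. The distinct-eigenvalue hypothesis on $P$ (inherited from Lemma~1) is precisely what makes all four critical points nondegenerate and keeps this argument clean; with repeated eigenvalues the critical set of $\psi$ would contain a positive-dimensional submanifold and the analysis would have to be redone.
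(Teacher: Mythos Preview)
Your proposal is correct and is precisely the argument the paper has in mind: the paper's own ``proof'' of this lemma is a one-line reference to \cite{chaturvedi2009asymptotic}, and your substitution to the PD closed loop, the Lyapunov function $V=\tfrac12 e_\omega^{T}Je_\omega+k_R\psi(R_e)$ with $\dot V=-k_\omega\norm{e_\omega}^{2}$, the LaSalle step, and the Morse/saddle-point exclusion of the three undesired equilibria are exactly that standard route (and you even cite the same references). Nothing is missing.
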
 
\noindent The proof follows form the one given for stabilizing controller in \citep{chaturvedi2009asymptotic}.

Due to the second order dynamics of moment in \eqref{eq:control_dynamics}, we define the moment error $M_e \triangleq M - M_d$ where $M_d$ is the desired moment given by r.h.s of \eqref{eq:rigid_body_control}. To make the moment error dynamics stable, we assign it the familiar spring-mass-damper structure by choosing control input as
\begin{equation} \label{eq:control}
u = \ddot{M}_d - D\dot{M}_e - KM_e,
\end{equation} 
where the damping matrix $D = \text{diag}(2\zeta_x\Omega_{x},2\zeta_y\Omega_{y},2\zeta_z\Omega_{z})$ and stiffness matrix $K = \text{diag}(\Omega_{x}^2,\Omega_{y}^2, \Omega_{z}^2)$. Here $\zeta_i$ and $\Omega_{i}$ ($i \in \{x,y,z\}$) are respectively the damping coefficient and natural frequency of the individual axes. With the control input \eqref{eq:control}, the error dynamics of the dynamically extended nominal system \eqref{eq:control_dynamics} reduces to 
\begin{subequations}  \label{eq:err_dyn}
\begin{equation}
\dot{R}_e = R_e \hat{e}_\omega
\end{equation} 
\begin{equation}
J \dot{e}_\omega = -k_R e_R - k_\omega e_\omega + M_e 
\end{equation}
\begin{equation} \label{eq:err_dyn_Me}
\ddot{M}_e + D\dot{M}_e + KM_e = 0.
\end{equation}
\end{subequations}

\noindent The above dynamical system has 4 equilibrium points characterized by $(e_R,e_\omega, M_e, \dot{M}_e) = (0,0,0,0)$ and explicitly $(R_{eq}\in \Theta,0,0,0)$. The following theorem states the stability properties of the error dynamics \eqref{eq:err_dyn} of the nominal system \eqref{eq:control_dynamics}.

\begin{theorem}
For $k_R>0$, $k_\omega > 0$, $\zeta_i > 0$ and $\Omega_{i}> 0$, the control input $u$ given by \eqref{eq:control}, renders the desired equilibrium $(I,0,0,0)$ of the error dynamics \eqref{eq:err_dyn} almost-globally asymptotically stable.
\end{theorem}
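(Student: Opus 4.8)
\noindent The strategy is to exploit the cascade (lower-triangular) structure of \eqref{eq:err_dyn}: the moment-error block \eqref{eq:err_dyn_Me} is autonomous and drives the attitude block, but is not influenced by it. First I would dispose of \eqref{eq:err_dyn_Me}. Since $D$ and $K$ are diagonal with $2\zeta_i\Omega_i>0$ and $\Omega_i^2>0$, \eqref{eq:err_dyn_Me} is a set of three decoupled, asymptotically stable second-order linear systems; hence $(M_e,\dot M_e)=(0,0)$ is globally exponentially stable, and there are constants $c,\kappa>0$ with $\norm{M_e(t)}\le c\,e^{-\kappa t}$ for every initial condition. What remains is the attitude block, which upon substituting the rigid-body control \eqref{eq:rigid_body_control} reads
\[
\dot R_e = R_e \hat{e}_\omega, \qquad J\dot e_\omega = -k_R e_R - k_\omega e_\omega + M_e,
\]
i.e. the nominal rigid-body closed loop of the rigid-body attitude tracking lemma forced by the exponentially vanishing input $M_e(t)$. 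For $M_e\equiv 0$ this block is, by that lemma, almost-globally asymptotically stable, with equilibrium set $\Theta\times\{0\}$ and only $(I,0)$ stable.

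Next I would establish forward completeness and boundedness of the attitude block under the perturbation. With the energy-like function $V = \half\, e_\omega^\top J e_\omega + k_R\,\psi(R_e)$, where $\psi$ is the configuration error function, a direct computation along the attitude block gives $\dot V = -k_\omega\norm{e_\omega}^2 + e_\omega^\top M_e \le -\tfrac{k_\omega}{2}\norm{e_\omega}^2 + \tfrac{1}{2k_\omega}\norm{M_e}^2$. Since $0\le\psi\le\bar\psi$ on the compact $SO(3)$ and $\lambda_{\min}(J)\norm{e_\omega}^2 \le 2V \le \lambda_{\max}(J)\norm{e_\omega}^2 + 2k_R\bar\psi$, a standard comparison argument with the bounded, integrable forcing $\norm{M_e(t)}^2$ shows that $V(t)$, hence $e_\omega(t)$, remains bounded for all $t\ge0$; as $R_e$ lives in the compact $SO(3)$, every solution of \eqref{eq:err_dyn} is then bounded and defined for all $t\ge0$. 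Read as an input-to-state estimate, the same inequality gives $e_\omega(t)\to0$ and $e_R(t)\to0$ as $t\to\infty$.

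I would then pin down the limit behaviour. The attitude block with $M_e(t)\to0$ is an asymptotically autonomous differential equation whose limit system is the nominal rigid-body closed loop; by boundedness, the $\omega$-limit set of each solution is a nonempty, compact, invariant set of that limit system. Applying LaSalle's invariance principle to $V$ along the limit system (its derivative there being $-k_\omega\norm{e_\omega}^2\le0$) identifies the largest invariant set contained in $\{\dot V=0\}$ as precisely the equilibrium set $\Theta\times\{0\}$. Hence every solution of \eqref{eq:err_dyn} converges to one of the four equilibria $(R_{eq}\in\Theta,0,0,0)$.

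The delicate last step is the upgrade to \emph{almost}-global asymptotic stability. Local asymptotic stability of $(I,0,0,0)$ holds because its linearization is block-triangular: the $M_e$-block is Hurwitz by construction, the $(R_e,e_\omega)$-block is the linearization of the nominal closed loop at $(I,0)$ and is Hurwitz, and the spectrum of the whole is their union. At each remaining equilibrium $(R_{eq}\ne I,0,0,0)$ the linearization is again block-triangular with the same Hurwitz $M_e$-block, while its $(R_e,e_\omega)$-block is the linearization of the nominal closed loop at $(R_{eq},0)$, which by the analysis underlying \cite{chaturvedi2009asymptotic} has at least one eigenvalue with positive real part; such an equilibrium is therefore hyperbolic, with a stable manifold of dimension strictly less than that of the full state space and hence of Lebesgue measure zero by the stable manifold theorem. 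The union of these three stable manifolds is a measure-zero set whose complement is forward invariant, and by the previous step every solution starting in that complement converges to $(I,0,0,0)$; together with the local asymptotic stability of $(I,0,0,0)$ this is exactly almost-global asymptotic stability. I expect this final step to be the main obstacle --- carefully justifying the asymptotically-autonomous/LaSalle reduction through the boundedness estimate and invoking the stable manifold theorem at the non-identity equilibria --- whereas the cascade decomposition and the Lyapunov computation are routine and largely inherited from the cited geometric-control literature.
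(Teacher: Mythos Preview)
Your proposal is correct and shares the paper's overall two-step strategy: (i) show every trajectory converges to one of the four equilibria $(R_{eq}\in\Theta,0,0,0)$, and (ii) linearize at each equilibrium to conclude that the three non-identity ones are hyperbolic with stable manifolds of measure zero, leaving $(I,0,0,0)$ as the almost-global attractor. Step (ii) is essentially identical in both, including the block-triangular structure of the Jacobian. The genuine difference is in step (i). The paper works with a single combined Lyapunov function $V = k_R\psi(R_e)+\tfrac12 e_\omega^\top J e_\omega+\tfrac12 M_e^\top K M_e+\tfrac12\dot M_e^\top\dot M_e$, obtains $\dot V = -e_\omega^\top(k_\omega e_\omega - M_e)-\dot M_e^\top D\dot M_e$, and then argues---under additional time-scale/damping restrictions on the gains (inner loop faster than outer, $e_\omega$ overdamped)---that $\dot V\le 0$ for all $t$ larger than some finite $t_0$, after which LaSalle applies. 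Your route instead treats the attitude block as an asymptotically autonomous system forced by the exponentially vanishing $M_e(t)$: you use the smaller energy $V=\tfrac12 e_\omega^\top J e_\omega+k_R\psi(R_e)$ with Young's inequality to get boundedness, then invoke the asymptotically-autonomous LaSalle principle on the limit (nominal) system. This buys you a proof without the paper's extra gain-separation hypotheses and avoids its somewhat delicate argument that $k_\omega\|e_\omega\|-M_0e^{-\lambda t}>0$ eventually. One minor caution: your sentence ``Read as an input-to-state estimate, the same inequality gives $e_\omega(t)\to0$ and $e_R(t)\to0$'' overstates what ISS alone yields here (the unforced system is only almost-GAS, not GAS), but since you immediately supply the asymptotically-autonomous argument, the conclusion stands.
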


\begin{proof}
We prove the almost-global asymptotic stability in two steps. First, we linearize the error dynamics about each of the four equilibrium points to check the dimension of stable manifold of each one. The linearization is used to show that the stable manifold associated with the undesired equilibrium points is a thin set. Next, LaSalle's invariance principle is used to show that all the initial conditions in the state space converge to one of the four equilibrium points. 

To linearize the error dynamics, differentiate the perturbation of equilibrium state $(R_{eq}e^{\epsilon \hat{\bar{\eta}}}, \epsilon \bar{e}_\omega  , \epsilon \bar{M}_e , \epsilon \dot{\bar{M}}_e)$ with respect to $\epsilon$ evaluated at $\epsilon = 0$, with the linearized system state being $(\bar{\eta}, \bar{e}_{\omega}, \bar{M}_e, \dot{\bar{M}}_e) \in \mathbb{R}^{12}$. The linearized equation is given by
\begin{equation} \label{eq:linear_dyn}
\frac{d}{dt} \begin{bmatrix}
\bar{\eta} \\
\bar{e}_\omega \\
\bar{M}_e \\
\dot{\bar{M}}_e
\end{bmatrix} = 
\underbrace{
\begin{bmatrix}
\pmb{0} & \pmb{I} & \pmb{0}  & \pmb{0}\\
-J^{-1} k_R B(R_{eq}) & -J^{-1}k_\omega & J^{-1} & \pmb{0}  \\
\pmb{0} & \pmb{0} & \pmb{0} & \pmb{I}  \\
\pmb{0} & \pmb{0} & -K & -D  
\end{bmatrix}}_{\textstyle S(R_{eq})}
\begin{bmatrix}
\bar{\eta} \\
\bar{e}_\omega \\
\bar{M}_e \\
\dot{\bar{M}}_e
\end{bmatrix}
\end{equation}
where $B(R_{eq}) = -\frac{1}{2}\sum_{i=1}^3 \hat{e}_i P R_{eq} \hat{e}_i$, and $\pmb{I}$ is the $3\times 3$ identity matrix. Details of the linearization procedure are given in \citep{raj2019heli}. For positive values of the gains $k_R, k_\omega, \zeta_i, \Omega_{i} $ it is observed that $S(R_{eq})$ is hyperbolic for all $R_{eq} \in \Theta$. The real part of eigenvalues of $S(R_{eq})$ reveal that the desired equilibrium point $(I,0,0,0)$ is stable and the rest have at least one eigenvalue that has positive real part. Therefore it follows that the stable eigenspace of unstable equilibria are of dimension less than the state space dimension. This in turn implies that the corresponding stable manifold associated with the unstable equilibria are of dimension less than state space and hence of is a thin set.

For applying LaSalle's invariance principle, the following candidate Lyapunov function is chosen which is the sum of Lyapunov function for the rigid body and the spring-mass system
\begin{equation}
V = k_R \psi(R_e) + \frac{1}{2}e_\omega^T J e_\omega + \frac{1}{2} M_e^T KM_e + \frac{1}{2} \dot{M}_e^T \dot{M}_e.  
\end{equation}
The directional derivative of the above function along the vector field of the error dynamics \eqref{eq:err_dyn} is
\begin{equation}
\dot{V} = -e_\omega^T (k_\omega e_\omega  -  M_e) - \dot{M}_e^T D \dot{M}_e 
\end{equation}
Since the moment error dynamics \eqref{eq:err_dyn_Me} is linear and stable, $\norm{M_e}$ decreases exponentially and $\norm{M_e(t)} < M_0 e^{-\lambda t}$ for some $M_0>0$ determined by the initial conditions $(M(0),\dot{M}(0))$, and $\lambda > 0$ determined from the gains $\zeta_i$ and $\Omega_{i}$. Therefore using the upper bound on $\norm{M_e}$ and Cauchy-Schwarz inequality, $\dot{V}$ is upper bounded by
\begin{equation} \label{eq:Lyp_ineq}
\dot{V} \leq -\norm{e_\omega} (\underbrace{k_\omega \norm{e_\omega} - M_0 e^{-\lambda t}}_{c(t)})  - \dot{M}_e^T D \dot{M}_e.
\end{equation}
If the gains are chosen such that: 1) $\norm{e_\omega}$ decays slower than $e^{-\lambda t}$, and 2) $e_\omega$ does not have oscillatory behavior (overdamped), then for all initial conditions, there is a finite time $t_0>0$ such that $c(t)>0$ for all $t>t_0$. An explanation for such a choice of gains is given in Sec. \ref{sec:gain}. Therefore, for $t>t_0$, $\dot{V}$ is negative semi-definite and LaSalle's invariance argument is valid. Since $V$ is bounded from below, all initial conditions converge to the largest invariant set within the set characterized by $\dot{V} \equiv 0$. Hence, from \eqref{eq:Lyp_ineq} the following arguments follow
\begin{equation}
\dot{V} \equiv 0 \implies e_\omega \equiv 0 \quad \text{and} \quad \dot{M}_e \equiv 0. 
\end{equation}
From \eqref{eq:err_dyn} and above
\begin{equation}
\dot{M}_e \equiv 0 \implies M_e \equiv 0, \quad e_\omega \equiv 0 \implies e_R \equiv 0.
\end{equation}
Therefore the limit set for all initial conditions in the state space is characterized by $(e_R,e_\omega, M_e, \dot{M}_e) = (0,0,0,0)$ which is the set of equilibrium points of the error dynamics \eqref{eq:err_dyn}. Hence, all initial conditions, except for the stable manifold associated with the unstable equilibria, converges to the desired equilibrium $(I,0,0,0)$.

\end{proof}

\begin{remark}
The proposed controller exhibits almost-global stability in the $M_z$ space which corresponds to $\delta$ belonging to $(-\pi/2,\pi/2)$. If the initial condition is such that $\delta(0) \in (-\pi/2,\pi/2)$, then the Frame-0 attitude  converges to the desired attitude for almost all initial conditions. This is a reasonable assumption as the vehicle always starts close to $\delta = 0$ configuration and the operating range of $\delta$ is well within $(-\pi/2,\pi/2)$.
\end{remark}

\begin{figure}[!htbp]
\begin{center}
\includegraphics[width=0.9\linewidth]{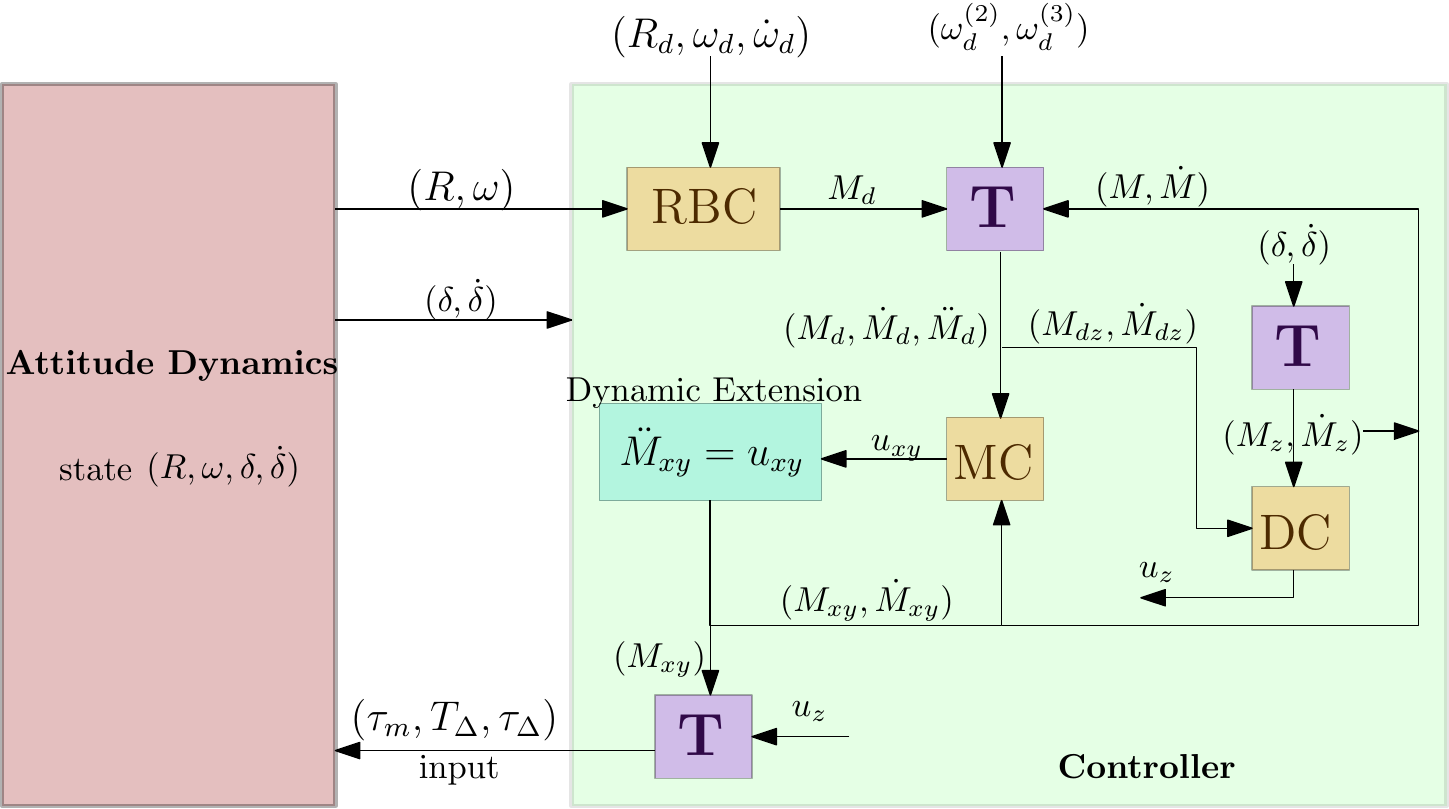}
\caption{Block diagram showing the structure of the attitude controller. Blocks labeled \textbf{T} are appropriate algebraic transformations. Blocks labeled RBC, MC and DC respectively represent rigid-body attitude controller, moment ($M_x, M_y$) controller and $\delta$ ($M_z$) controller.}
\label{fig:control_block}
\end{center}
\end{figure}

\subsection{Controller structure}
The nominal dynamics \eqref{eq:control_dynamics} resembles the structure of two double integrators in cascade. 
\begin{equation}
\begin{gathered}
\ddot{X} = Y \\
\ddot{Y} = v
\end{gathered}
\end{equation}
Here, the tracking output $X \in \mathbb{R}^3$ with the desired command $X_d$. By assigning the individual double integrators a spring-mass-damper structure, the following error dynamics is obtained which resembles the linearized error dynamics \eqref{eq:linear_dyn}
\begin{equation} \label{eq:linear_dyn1}
\frac{d}{dt} \begin{bmatrix}
X_e \\
\dot{X}_e \\
Y_e \\
\dot{Y}_e
\end{bmatrix} = 
\begin{bmatrix}
\pmb{0} & \pmb{I} & \pmb{0}  & \pmb{0}\\
-K_X & -D_X & \pmb{I} & \pmb{0}  \\
\pmb{0} & \pmb{0} & \pmb{0} & \pmb{I}  \\
\pmb{0} & \pmb{0} & -K_Y & -D_Y  
\end{bmatrix}
\begin{bmatrix}
X_e \\
\dot{X}_e \\
Y_e \\
\dot{Y}_e
\end{bmatrix},
\end{equation}
where $X_e = X - X_d$, $Y_e = Y - Y_d$, $Y_d = \ddot{X}_d - D_X\dot{X}_e - K_X X_e$ and control input $v = \ddot{Y}_d - D_Y\dot{Y}_e - K_Y Y_e$. The gain matrices $K_X$, $K_Y$ and $D_X$, $D_Y$ are composed of natural frequency and damping parameters as in \eqref{eq:control}.  The error dynamics has the property that the eigenvalues corresponding to the inner loop and outer loop are independent and equal to their corresponding natural frequencies assigned. This makes the process of choosing individual rate of convergence for the inner and outer loops of error dynamics.

\subsection{Gain selection} \label{sec:gain}
We refer the second order moment dynamics \eqref{eq:moment_dyn} as the inner loop and the rigid body dynamics \eqref{eq:rigid_body_a}, \eqref{eq:rigid_body_b} as the outer loop. In the inner loop, $M_z$ dynamics corresponds to $\delta$ dynamics and is controlled by physical actuators ($\tau_\Delta$). Whereas, the $M_x$ and $M_y$ dynamics are dynamic extension and is part of the controller and is not affected by actuator bandwidth. Similarly, $\omega_x$ and $\omega_y$ dynamics is also actuated directly by the motors. The gains of the individual loops directly actuated by physical actuators ($\omega_x, \omega_y, \dot{\delta}$) are limited by the actuator bandwidth. 
%And for good handling quality (transient response) the inner loop gains are kept higher than the outer loop ones, which also ensures that $M_e$ decays faster than $e_\omega$. Therefore, by comparing \eqref{eq:linear_dyn} and \eqref{eq:linear_dyn1} one could arrive at appropriate gains.
The upper 2x2 block diagonal matrix of \eqref{eq:linear_dyn} represent the attitude dynamics and the lower 2x2 block diagonal represent the moment dynamics. Here, $K$ and $J^{-1}k_RB(R_{eq})$ are stiffness terms representing the equivalent natural frequencies, and $D$ and $J^{-1}k_\omega$ are the damping terms representing the equivalent damping ratio. Oscillatory behavior in the response of $e_\omega$ could be avoided, as required in the proof of Theorem 1, by assigning a damping ratio greater than one. Similarly, by assigning a higher stiffness to the inner loop (moment dynamics) than the outer loop, one could ensure a faster decay of $M_e$ compared to $e_\omega$. 

\begin{figure}[!htbp]
\begin{center}
\includegraphics[width=1.0\linewidth]{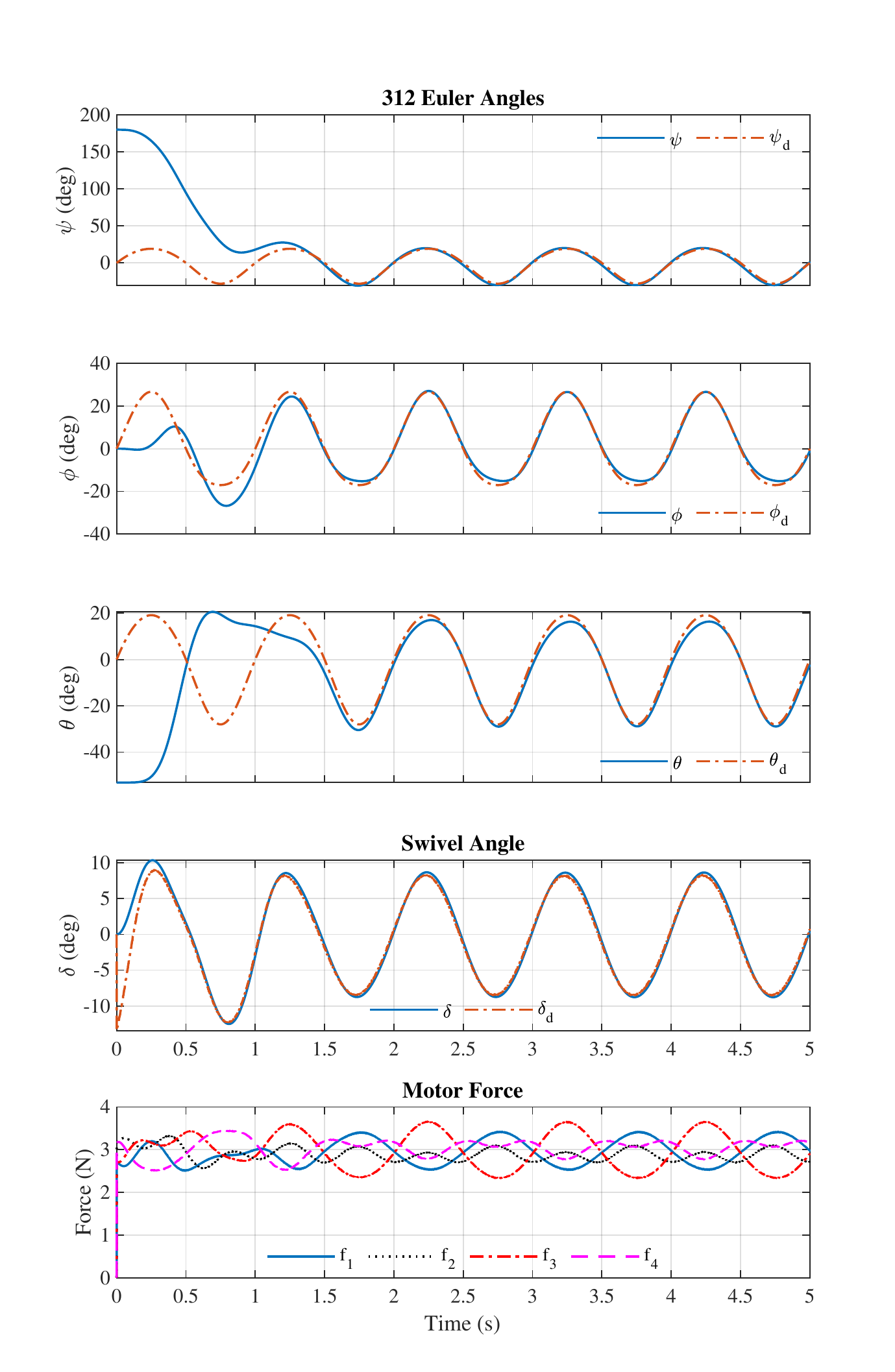}
\caption{Performance of the proposed controller in simulation with 5 percent inertia uncertainty, noisy measurement, and actuator dynamics.}
\label{fig:sim}
\end{center}
\end{figure}

\section{Numerical Simulation} \label{sec:sim}
Numerical simulations were carried out to check the efficacy of the controller in the presence of unmodeled dynamics as the controller is based on a nominal model with constant inertia, has dynamic extension component and involves cancellation of certain nonlinearities. Simulation was carried out on a model whose parameters (see Table \ref{tab:biplane_params}) were derived from the vehicle used for experimentation. 
The following three disturbances were incorporated in the simulation model.
\begin{enumerate}
\item Inertia was varied upto 5 percent of the true value in the simulation. The true value of inertia had to be obtained from CAD model.
\item A first order motor dynamics with a time constant of 0.015 s was used to approximate the actuator dynamics. The time constant was obtained from load cell measurements.
\item Additive Gaussian white noise with zero mean and 0.075 rad/s standard deviation models effect of vibration on angular velocity measurement. The value of the standard deviation was arrived based on the flight data recorded in hover condition. The configuration states $(R,\delta)$ are not affected by high frequency noise as they are obtained from estimators after integration.
\end{enumerate}

As a reference output trajectory, the virtual frame was made to track a sinusoidal of 40 deg amplitude and 1 Hertz frequency about spatial frame fixed (1,1,1) axis. This reference signal will sufficiently excite the underactuated $\omega_z$ dynamics and the gyroscopic moments. A large initial attitude error (180,0,50) deg in terms of 312 Euler angles was used. The bock diagram structure of the controller is shown in Fig. \ref{fig:control_block}. 
The performance of the controller based on the nominal vehicle dynamics with the above mentioned disturbance is shown in Fig. \ref{fig:sim}. The configuration states converge to its reference value in under 2 seconds and the actuator command required to do so is well within the motor force limits of 6.74 N. It is evident that the added uncertainties and nonlinearity cancellation have not introduced instability in the closed loop dynamics and the tracking performance is reasonable.

\begin{table} [tbp!]
\renewcommand{\arraystretch}{1.3}
\caption{Swiveling biplane parameters}
\label{tab:biplane_params}
\centering
\begin{tabular}{l l l} 
 \hline
 Parameter & Description & Values  \\
 \hline
 $[J_{xx}, J_{yy}, J_{zz}]$ & Moment of inertia & [1.111, 1.36, 2.275]$\times10^{-2}$ $kg$-$m^2$  \\ 

 $l$  & Wing separation & 42 $cm$ \\

 $L$  & Motor separation on Wing-1,2 & 61 $cm$ \\
 
 $m$ & Vehicle mass & 800 $g$\\ 
 
 $\tau_m$  & Motor time constant & 0.015 $s$  \\  
 
 $F_{max}$ & Maximum motor force & 6.74 $N$ \\ [1ex]
 \hline
\end{tabular}
\end{table} 

\begin{figure}[!htbp]
\begin{center}
\includegraphics[width=0.6\linewidth]{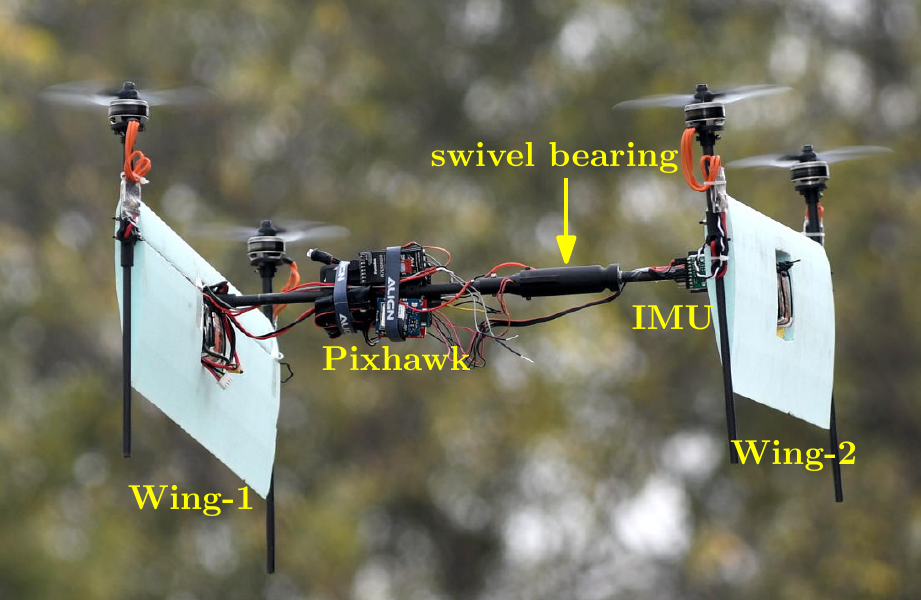}
\caption{Swiveling biplane quadrotor with the swiveling mechanism, autopilot and external IMU highlighted. \\ Experimental video link: \url{https://youtu.be/EXZ67fpw6I4}}
\label{fig:swivel_pic}
\end{center}
\end{figure}

%\begin{figure}[!tbp]
%\begin{center}
%\includegraphics[width=1.0\linewidth]{figures/expm.pdf}
%\caption{Tracking performance of the controller with roll, pitch and yaw axes excited individually. Roll and pitch input were given manually, while the yaw input was sinusoidal with 20 deg amplitude and a 2 second period.}
%\label{fig:exp}
%\end{center}
%\end{figure}
\begin{figure}[!tbp]
\begin{center}
\includegraphics[width=1.0\linewidth]{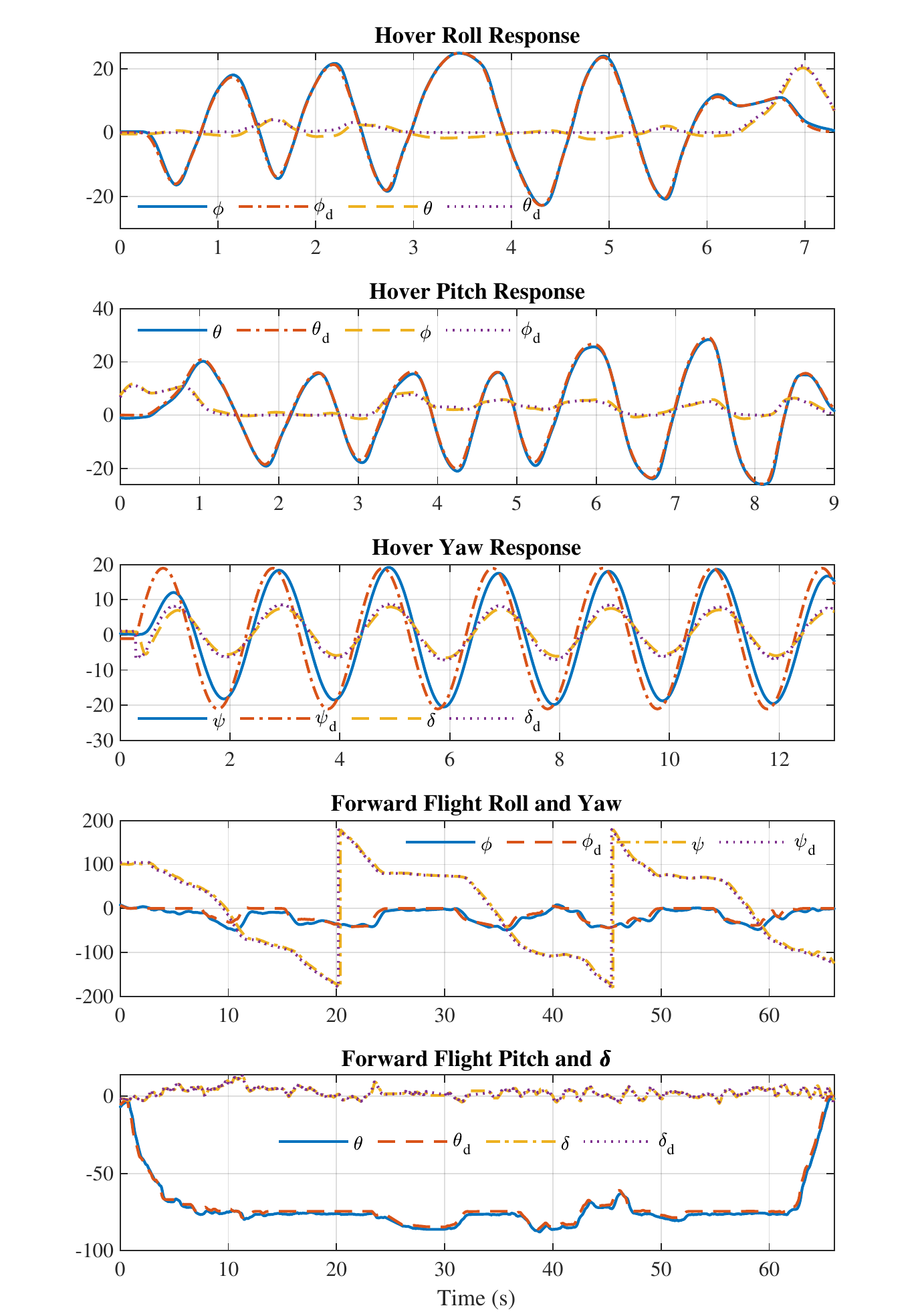}
\caption{First three plots show the tracking performance of the controller with roll, pitch and yaw axes excited individually. Roll and pitch input were given manually, while the yaw input was sinusoidal with 20 deg amplitude and a 2 second period. Last two plots show the tracking performance of the controller in forward flight. Roll and pitch input were given manually, while the yaw input was set to maintain zero sideslip.}
\label{fig:exp_forward}
\end{center}
\end{figure}

\section{Experimental Result} \label{sec:exp}
The experiments were conducted to evaluate the robustness of the proposed attitude tracking controller under near hover and forward flight conditions.  For near hover experiments, separate flights were conducted for each axes to restrict the resulting unwanted translational motion. The vehicle was manually commanded in roll and pitch axes whereas a sinusoidal reference command of 20 deg amplitude and 2 second time period was given for the yaw axis. For the forward flight experiments, the vehicle was flown manually starting from hover and then transitioned into cruise flight by gradually pitching the vehicle forward by 90 deg. 

\subsection{Vehicle Construction}
The vehicle used for experimentation is shown in Fig. \ref{fig:swivel_pic} and its  parameters are given in Table \ref{tab:biplane_params}. The wings are made of Expanded PolyPropylene (EPP) foam and the structure is built from 8 mm carbon fiber tubes. The power system involves two 1850 mAh, 3 cell lithium polymer batteries attached to the wings, 2400 KV brushless DC motors, and 30 A ESC. The swiveling mechanism is made of two ball bearings enclosed in a 3D printed casing. The inner shaft of the bearing is attached to a carbon fiber (CF) rod which is rigidly attached to Wing-2, and the outer casing of the bearing is attached to the CF rod rigidly attached to Wing-1. Enclosed withing the casing is a rotary encoder to measure the swivel angle. A Pixhawk is rigidly attached to the CF rod of Wing-1 and an external IMU is attached to Wing-2 to measure the swivel rate. The PX4 flight stack with its default quaternion based attitude estimator fuses the onboard IMU data to estimate the attitude of Wing-1 ($R_1, \omega_1$) and a complementary filter was used to fuse the encoder data and external IMU data to obtain $(\delta,\dot{\delta})$ at 250 Hz. The attitude tracking controller was implemented as a separate module and runs at 250 Hz. 

\subsection{Discussion}
The first set of experiments were for near hover conditions and the results are depicted in the top three plots of Fig. \ref{fig:exp_forward}. The first experiment involved manually exciting the vehicle about roll axis with the other two axes being held close to zero input. The second involved a similar manual excitation about the pitch axes. In both the aforementioned cases, the tacking is almost perfect with minor tracking deviation and negligible delay. Next, the yaw axis was commanded to track a sinusoidal command, and the relatively delayed response could be attributed to the tracking error in $\delta$ dynamics and disturbance torque due to aerodynamic interaction of propeller wake with wing. 

The last two plots of Fig. \ref{fig:exp_forward} show the performance of the controller during transition and forward flight. The same set of gains were used in the forward flight as the hover test flights. It is observed that there is some deviation in the roll tracking performance in forward flight when compared to yaw tracking in hover conditions, both corresponding to rotations about the underactuated body frame Z-axis. This error could be attributed to relatively large unaccounted aerodynamic damping torque from wing and disturbance torque due to components placed on the connecting rod resulting in aerodynamic asymmetry. However, the tracking was found to be satisfactory in pitch and yaw axes during the transition and forward flight.

\section{Conclusion and Future work} \label{sec:future}
The swiveling biplane-quadrotor introduced in this work has interesting  attitude dynamics with a configuration dependent inertia. Due to the underactuated nature of the system in the attitude manifold, an output tracking problem was posed in which the attitude of an intermediate virtual frame was chosen as the output. The method of dynamic feedback-linearization on a nominal model with constant inertia resulted in trivial zero dynamics and the resulting extended system had second order moment dynamics. The error due to cancellation of certain non-linearities and the constant inertia approximation was suppressed by the feedback component of the controller. The performance of the output tracking controller was found to be satisfactory both in simulation and manual flight experiments. The controller derived for the nominal model is also shown to have almost-global convergence property, which is the best that can be achieved for a system evolving on a non-Euclidean space \cite{bhat2000topological}.

As part of the future work we intend to incorporate the effect of aerodynamic forces on the vehicle and use this to augment the existing attitude tracking  controller for the biplane-mode cruise flight.

%%%%%%%%%%%%%%%%%%%%%%%%%%%%%%%%%%%%%%%%%%%%%%%%%%%%%%%%%%%%%%%%%%%%%%

%\clearpage
\nocite{*}
\bibliography{ref}
\end{document}